\newcommand {\E}{\mathbf E}
\newcommand {\V}{\mathcal V}
\newcommand{\EE}{\epsilon}
\newcommand{\p}{\mathbf{p}}
\newtheorem{theorem}{Theorem}[section]
\newtheorem{lemma}[theorem]{Lemma}
\newtheorem{claim}[theorem]{Claim}
\newtheorem{corollary}[theorem]{Corollary}
\newtheorem{define}{Definition}
\newtheorem{observation}{Observation}
\begin{document}


\title{Budget Constrained Auctions with Heterogeneous Items}
\author{Sayan Bhattacharya\thanks{Department of Computer Science, Duke University, Durham NC 27708. Email: {\tt bsayan@cs.duke.edu}} \and Gagan Goel\thanks{College of Computing, Georgia Institute of Technology, Atlanta GA 30332. Email: {\tt gagang@cc.gatech.edu}. } \and Sreenivas Gollapudi\thanks{Microsoft Research Silicon Valley, Mountain View, CA 94043. Email: {\tt sreenig@research.microsoft.com}. } \and Kamesh Munagala\thanks{Department of Computer Science, Duke University, Durham NC 27708. Supported by an Alfred P. Sloan  Research Fellowship, and by NSF via CAREER award CCF-0745761 and grant  CNS-0540347. Email: {\tt kamesh@cs.duke.edu}}}
\date{}


\maketitle
\begin{abstract}
In this paper, we present  the first approximation algorithms for the problem of designing revenue optimal Bayesian incentive compatible auctions when there are multiple (heterogeneous) items and when bidders have arbitrary demand and budget constraints (and additive valuations). Our mechanisms are surprisingly simple: We show that a sequential all-pay mechanism is a $4$ approximation to the revenue of the optimal ex-interim truthful mechanism with a discrete type space for each bidder, where her valuations for different items can be correlated. We also show that a sequential posted price mechanism is a $O(1)$ approximation to the revenue of the optimal ex-post truthful mechanism when the type space of each bidder is a product distribution that satisfies the standard hazard rate condition. We further show a logarithmic approximation when the hazard rate condition is removed, and complete the picture by showing that achieving a sub-logarithmic approximation, even for regular distributions and one bidder, requires pricing bundles of items. Our results are based on formulating novel LP relaxations for these problems, and developing generic rounding schemes from first principles. 
\end{abstract}


  

\section{Introduction}
In several scenarios, such as the Google TV ad auction~\cite{nisan2} and the the FCC spectrum auctions~\cite{pino}, where auctions have been applied in the recent past, bidders are constrained by the amount of money they can spend. This leads to the study of auctions with budget-constrained bidders, which is the focus of this paper. The key difficulty with budgets is that the utility of a bidder is only quasi-linear as long as the price is below the budget constraint, but is $-\infty$ if the price exceeds the budget. As a consequence, well-known mechanisms such as the VCG mechanism are no longer directly applicable. Before proceeding further, we formally define our model.

\subsection{Our Model}
\label{sec:model} 
There are $m$ bidders and $n$ heterogeneous items. The {\em type} of a bidder is defined as the collection of her valuations for each item; we will assume throughout the paper that her valuations for different items are additive. Further, her type is private knowledge. We augment this basic model for bidder $i$ with two publicly known constraints: A {\em demand constraint} $n_i$ on the maximum number of items  she is willing to buy, and a {\em budget constraint} $B_i$ on the maximum total price she can afford to pay. Her utility is defined as follows: Suppose she gets a subset $A$ of items where $|A| \leq n_i$, and pays a total price $P_i$. If $v_{ij}$ denotes her valuation for item $j$, then her utility is given by $\sum_{j \in A} v_{ij} - P_i$ if $P_i \leq B_i$, and $-\infty$ otherwise. The goal of the auctioneer is to design a mechanism that maximizes its revenue, which is defined as $\sum_i P_i$. We require that the mechanism be {\em incentive compatible} in that no bidder gains in utility by misreporting her valuations, and {\em individually rational}, meaning that a bidder gets nonnegative utility by reporting the truth. Some of our results generalize to the case where the budget constraints are private knowledge.

There are two well-established ways of proceeding from here. In the adversarial setting, no assumptions are made on the valuations, while in the Bayesian setting, it is assumed the valuations are drawn from publicly known prior distributions. We take the latter Bayesian approach, which was pioneered by Myerson~\cite{myerson}. As mentioned above, in this setting, the bidders' private valuations are drawn from independent (but not necessarily identical) commonly known prior distributions. The auctioneer's goal is to maximize her expected revenue, where the expectation is over the prior distributions, and possible randomization introduced by the mechanism itself. In the Bayesian setting, the notions of truthfulness and individual rationality can be of two kinds:
\begin{description}
\item [Bayesian Incentive Compatible (BIC):] Here, each bidder's expected utility is maximized by truth-telling, where the expectation is over the valuations of the other bidders (which are drawn according to their priors) and the randomness introduced by the mechanism. In other words, truth-telling is optimal only in expectation. 
\item[Dominant Strategy Incentive Compatible (DISC):] In this setting, truth-telling is optimal for the bidder even if she knows the valuations of the other bidders and the random choices made by the mechanism. 
\end{description}
We note that traditionally, the terms BIC and DSIC only refer to randomness introduced by other bidders' valuations via the prior distributions. Therefore, to be very precise, the space of mechanisms is two-dimensional,  randomized versus universally truthful mechanisms on one dimension, and BIC versus DSIC mechanisms on the other. We collapse this space to randomized BIC versus universally truthful DSIC mechanisms, so as not to introduce additional terminology. 

\subsection{Our Results}
\label{sec:results} 
Our results depend on the structure of the type space of each bidder. In particular, the results depend on whether the distributions of the valuations of a bidder for different items are correlated or independent. We note that the correlations if any are {\em only} for valuations of one bidder for different items; the distributions of the valuations of {\em different} bidders are always assumed to be independent. All our results can be viewed as presenting simple characterizations of approximately optimal mechanisms in these contexts.

We will first consider the situation when valuations of a bidder for different items are arbitrarily correlated, so that the type space of a bidder is simply specified as a poly-bounded discrete distribution. In Section~\ref{sec:lot}, we present a simple {\em all-pay} mechanism: it charges each bidder a fixed price that depends only on its revealed type, while the allocation made to the bidder will be {\em random} depending on the types of other bidders. The resulting scheme is therefore only BIC; however, we show that it is a $4$ approximation to the revenue of the optimal BIC scheme (Theorem~\ref{thm:allpay}). 

\medskip
An all-pay auction is unrealistic in several situations, since a bidder is forced to participate even if she obtains negative utility when the auction concludes. A natural question to ask is whether there exists a DSIC mechanism with good revenue properties. However, it is an easy exercise to show that the problem of designing an optimal DSIC mechanism for one unit-demand bidder ($n_i = 1$) whose type can take one of $n$ arbitrarily correlated possibilities with equal probability, reduces to the problem of unlimited supply {\em envy-free pricing} with $n$ bidders~\cite{kempe}. For the envy free pricing problem, the best known is a logarithmic approximation, and there is strong evidence that a better approximation is not possible~\cite{Briest}. Furthermore, in the correlated valuation setting, there are examples showing an unbounded gap between the revenues of the optimal randomized (BIC) and deterministic (DSIC) mechanisms~\cite{shuchi2}. This significantly dims the possibility of a constant factor approximate DSIC mechanism with discrete correlated types.

In view of the above negative result, we assume the valuations of a bidder for different items follow independent distributions. This {\em product distribution} assumption is also used by Chawla {\em et. al.}~\cite{shuchi}, who consider the special case of a single bidder with unit demand. In this special case, it is easy to see that any DSIC mechanism is a posted price mechanism, and they show a constant factor approximation to its revenue by an elegant connection to Myerson's mechanism~\cite{myerson}. Independently of our work, Chawla {\em et al.}~\cite{shuchi1} extend their result to show a $O(1)$  approximation for multiple unit-demand bidders. However, this result also crucially require the unit-demand assumption. 

\medskip
In Section~\ref{sec:mon}, we show that for arbitrary demands and budgets, when the type of each bidder follows a product distribution where each dimension satisfies the standard monotone hazard rate (MHR), there is a constant factor approximation to the revenue of optimal BIC mechanism. Our mechanism is achieved by a DSIC mechanism with surprisingly simple structure: Consider the bidders sequentially in arbitrary order, and for each bidder, offer a subset of remaining items at a pre-computed price for each item; the bidder simply chooses the utility maximizing bundle of these items.  This posted price mechanism shows a constant factor gap between the revenues of the optimal BIC and DSIC mechanisms (Theorem~\ref{thm:optitem}), which is in sharp contrast with the corresponding negative result~\cite{shuchi2} when the type space of a bidder is correlated. The surprising aspect of our result is that since the optimal DSIC mechanism for general demands and budgets will price bundles of items,  there is {\em no a priori reason} to expect that the more simplistic scheme of posting prices for each item would be a constant approximation.

We show (Section~\ref{sec:gen}) that the MHR condition is indeed necessary for our item pricing result: If this condition is slightly relaxed to regular product distributions and one bidder, the optimal DSIC mechanism will price {\em bundles of items}, and will have a logarithmic gap against the revenue of the optimal item pricing scheme. We show that this gap is tight by showing the existence of a posted price mechanism achieving this approximation ratio (Theorem~\ref{thm:gen}). On a positive note, we conclude by proving that for regular distributions, if the space of feasible mechanisms is restricted to those that consider bidders in some adaptive order and post prices that may depend on the outcomes so far, there is a $O(1)$ approximation within this space, that considers bidders in an arbitrary but fixed order, and pre-computes the posted prices (Theorem~\ref{thm:final}).

\subsection{Related Work}
\label{sec:work}
The Bayesian setting is widely studied in the economics literature~\cite{BenoitK,pino,Che1,Che2,laffont,vincent,vohra,thas,wilson}. In this setting, the optimal (either BIC or DSIC) mechanism can always be computed by encoding the incentive compatibility constraints in an integer program and maximizing expected revenue. However, the number of variables (and constraints) in this IP is exponential in the number of bidders, there being variables for the allocations and prices for each scenario of revealed types. Therefore, the key difficulty in the Bayesian mechanism design case is {\em computational}: {\em Can the optimal (or approximately optimal) auction be efficiently computed and implemented?}

Much of the literature in economics considers the case where the auctioneer has one item (or multiple copies of the item). In the absence of budget constraints, Myerson~\cite{myerson} presents the characterization of any BIC mechanism in terms of expected allocation made to a bidder: This allocation must be monotone in the revealed valuation of the bidder, and furthermore, the expected price is given by applying the VCG calculation to the expected allocation. This yields a linear-time computable optimal revenue-maximizing auction that is both BIC and DSIC. The key issue with budget constraints is that the allocations need to be thresholded in order for the prices to be below the budgets~\cite{Che1,laffont,vohra}. However, even in this case, the optimal BIC auction follows from a polymatroid characterization that can be solved by the Ellipsoid algorithm and an all-pay condition~\cite{Border,vohra}. By {\em all-pay}, we mean that the bidder pays a fixed amount given his revealed type, regardless of the allocation made. This also yields a DSIC mechanism that is $O(1)$ approximation to optimal BIC revenue~\cite{sayan}, but the result holds only for homogeneous items. 

\medskip
An alternative line of work deals with the {\em adversarial setting}, where no distributional assumption is made on the bidders' private valuations. In this setting, the budget constrained auction problem is notorious mainly because standard auction concepts such as VCG, efficiency, and competitive equilibria do not directly carry over~\cite{nisan2}. Most previous results deal with the case of multiple units of a homogeneous good. In this setting, based on the random partitioning framework of Goldberg {\em et al.}~\cite{Goldberg04}, Borgs {\em et al.}~\cite{borgs} presented a truthful auction whose revenue is asymptotically within a constant factor of the optimal revenue  (see also~\cite{abrams}). If the focus is instead on optimizing social welfare, no non-trivial truthful mechanism can optimize social welfare~\cite{borgs}. Therefore, the focus has been on weaker notions than efficiency, such as Pareto-optimality, where no pair of agents (including the auctioneer) can simultaneously improve their utilities by trading with each other. Dobzinski {\em et al.}~\cite{nisan} present an ascending price auction based on the clinching auction of Ausubel~\cite{ausubel}, which they show to be the only Pareto-optimal auction in the public budget setting. This result was extended to the private budget setting by Bhattacharya {\em et al.}~\cite{sayan}; see ~\cite{ravi} for a related result. 

\medskip
Finally, several researchers have considered the behavior specific types of auctions, for instance, auctions that are sequential by item and second price within each item~\cite{BenoitK,elkind}, and ascending price auctions~\cite{ausubel2,pino}. The goal here is to analyze the improvement in revenue (or social welfare) by optimal sequencing, or to study incentive compatibility of commonly used ascending price mechanisms. However, analyzing the performance of sequential or ascending price auctions is difficult in general, and there is little known in terms of optimal mechanisms (or even approximately optimal mechanisms) in these settings.

\subsection{Our Techniques}
\label{sec:techniques}
If for every bidder, the valuations for different items are correlated (Section~\ref{sec:lot}), then the optimal BIC revenue can be bounded from above by a linear program ({\sc LP1}) that requires the incentive compatibility, voluntary participation, supply and demand constraints to hold {\em only in expectation}. We construct a BIC all pay auction (Figure~\ref{fig:allpay}) that basically implements a rounding scheme on the optimal solution to {\sc LP1}, loosing a constant factor in revenue (Theorem~\ref{thm:allpay}). This approach is based on the techniques used in~\cite{sayan}.

As mentioned before, Chawla {\em et. al.}~\cite{shuchi} consider the Bayesian unit-demand pricing problem. There are $n$ heterogeneous items, a single bidder with unit demand, and her valuations ($v_j$ for item $j \in [1, \ldots , n]$)  are drawn from independent distributions. They present an elegant pricing scheme that is a constant approximation to optimal revenue by upper bounding it using the revenue of Myerson's auction in the following setting: There is a single item, $n$ bidders, and valuation of each bidder $j$ follows the same distribution as that of $v_j$. However, this technique cannot be applied if the unit demand assumption is removed.  

In contrast, our approach to the independent valuations setting (Section~\ref{sec:seq}) does {\em not} require the unit-demand assumption, and is based on a novel LP relaxation ({\sc LPRev}) for the problem (Lemma~\ref{lem:crux}). Unlike the LP relaxation of Section~\ref{sec:lot} and perhaps surprisingly, {\sc LPRev} does not encode any incentive compatibility constraints, and our DSIC mechanism (Figure~\ref{fig:postprice}), that competes against this LP, is in fact a constant approximation to optimal BIC revenue. One limitation of our approach is that we have to (necessarily) assume the bidders' valuations satisfy montone hazard rate (MHR). In the process of proving our main result (Theorem~\ref{thm:optitem}), we describe a crucial property of MHR distributions (Lemma~\ref{lem:hazard}) that can be used to extend the type of results shown in~\cite{HR}. For example, in multi-item settings with only demand constraints, {\em posted price schemes generate revenue that is a constant factor of the optimal social welfare}, assuming MHR distributions (Corollary~\ref{cor1}). The LP formulations also generalize the stochastic matching setting in Chen {\em et al.}~\cite{chen}.

\section{Correlated Valuations:  BIC Mechanisms}
\label{sec:lot}
We first consider the problem of approximating the optimal Bayesian incentive compatible mechanism. There are $m$ bidders and $n$ indivisible items. The type $t$ of a bidder is a $n$-vector $\langle v_{1t}, v_{2t}, \ldots, v_{nt}\rangle$, which specifies her valuations for the items. For bidder $i$, we assume the types follow a  publicly known discrete distribution with polynomial support, where for $t = t_1, t_2, \ldots, t_K$, we have $f_{i}(t) = \Pr[\mbox{Type } = t]$. The distributions for different bidders are independent. Note that the above model allows the valuations of a bidder for different items to possibly be correlated.  The bidder can afford to pay at most a publicly known budget $B_i$, and is interested in at most $n_i \ge 1$ items.  (Our scheme extends to private budgets using the techniques in~\cite{sayan}.)   We will be interested in mechanisms that can be computed in time polynomial in the input size, {\em i.e.}, in $n, m$, and $K$.

For reported types $\vec{t}$, the auctioneer computes allocation of items $S_i(\vec{t})$ for each bidder $i$. It also computes prices $P_i(\vec{t})$. The subsets $S_i(\vec{t})$ of items need to be disjoint, and  $P_i(\vec{t}) \le B_i$. The  utility of a bidder for obtaining subset $S$ at price $P$ when her type is $t$ is $\sum_{j \in S} v_{jt}  - P$. A mechanism  should be incentive compatible in the following sense: For any bidder $i$, her expected utility, where the expectation is over the types of other bidders, is maximized if she reveals her true type.  The auctioneer is interested in designing an incentive compatible mechanism that maximizes revenue: $\E_{\vec{t}} \left[\sum_i  P_i(\vec{t}) \right]$.

We now show an all-pay auction mechanism that is a $4$-approximation to optimal revenue. Our LP relaxation and solution technique are inspired by the rounding scheme in~\cite{sayan} for multi-unit auctions.

\medskip
\noindent{\bf Linear Programming Relaxation. }
Throughout, we will denote bidders by $i$, items by $j$, and types by $t$. For any feasible mechanism, let $x_{ij}(t)$ denote the probability that bidder $i$ obtains item $j$ if her reported type is $t$.  Let $P_{i}(t)$ denote the expected price paid by bidder $i$ when she bids type $t$. We have the following LP:

\[ \mbox{Maximize } \ \ \ \sum_{i,t} f_{i}(t) P_{i}(t) \qquad \mbox{({\sc LP1})}  \]
\[ \begin{array}{rcll}
\sum_{i,t} f_{i}(t) x_{ij}(t) & \le & 1 & \forall j \\
\sum_{j} x_{ij}(t) & \le & n_i & \forall i, t \\
\sum_j v_{jt} x_{ij}(t) - P_{i}(t)  & \ge & \sum_j v_{jt} x_{ij}(s) - P_{i}(s)  & \forall i, t, s\\ 
\sum_j v_{jt} x_{ij}(t) - P_{i}(t)   & \ge & 0  & \forall i, t\\
x_{ij}(t) & \in & [0, 1] & \forall i, j, t\\
P_{i}(t) & \in & [0,B_i] & \forall i, t \\
\end{array} \]

The optimal Bayesian incentive compatible mechanism  is feasible for the above constraints: The first constraint simply encodes that in expectation, each item is assigned to at most one bidder; the second constraint encodes the demand. The third and fourth constraints encodes Bayesian incentive compatibility and individual rationality.   Therefore, the {\sc LP1} value is an upper bound on the expected revenue.

\newcommand{\G}{\mathcal{G}}

\begin{theorem}
\label{thm:allpay}
The All Pay Auction (Figure~\ref{fig:allpay}) is Bayesian Incentive Compatible, and its revenue is a $4$ approximation to the optimal BIC mechanism for heterogeneous items.
\end{theorem}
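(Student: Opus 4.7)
The plan is an LP-rounding argument. Start by observing that LP1 is a valid relaxation: for any BIC mechanism, set $x_{ij}(t)$ to be the probability bidder $i$ receives item $j$ given she reports $t$ (over the randomness of other bidders' types and the mechanism), and $P_i(t)$ to be her expected payment conditioned on reporting $t$. The first two constraints encode supply and demand in expectation, the third is the BIC constraint (reporting $t$ when true type is $t$ weakly dominates reporting $s$), and the fourth is IR. Thus the optimal BIC revenue is at most $\text{LP1}^\ast$, and it suffices to construct an all-pay BIC mechanism collecting revenue at least $\text{LP1}^\ast/4$.

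Let $(x^\ast, P^\ast)$ be an optimal LP1 solution. In the mechanism of Figure~\ref{fig:allpay}, if bidder $i$ reports type $t$, she pays the deterministic amount $P^\ast_i(t)/2$ (regardless of what she receives), and is then allocated items via a randomized rounding that ensures each item $j$ is handed to her with marginal probability \emph{exactly} $x^\ast_{ij}(t)/2$. The rounding is built in two stages, following the template of~\cite{sayan}: first, a per-bidder stage that produces a tentative set of requested items of size at most $n_i$ and in which item $j$ is requested with marginal probability $x^\ast_{ij}(t)$; then, a per-item contention resolution stage across bidders, where one uses the supply constraint $\sum_{i,t} f_i(t) x^\ast_{ij}(t) \le 1$ together with a Markov-style argument to pick a winner so that each requester wins with conditional probability $\ge 1/2$. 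Composing the two stages gives marginal allocation $x^\ast_{ij}(t)/2$, and the factor $1/2$ in the payment is precisely what is needed to preserve IR and BIC after rounding.

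To verify BIC, observe that since the price depends only on the reported type, the expected utility of bidder $i$ with true type $t$ when she reports $s$ is $\sum_j v_{jt}\bigl(x^\ast_{ij}(s)/2\bigr) - P^\ast_i(s)/2$, which is exactly half of the corresponding expression in LP1's third constraint; hence truth-telling is optimal. IR follows identically from LP1's fourth constraint. The revenue is $\sum_{i,t} f_i(t) \cdot P^\ast_i(t)/2 = \text{LP1}^\ast/2 \ge \text{OPT}/2$; the extra factor of two hidden inside the allocation rounding (one half per stage) is what pushes the overall ratio to $4$ once one accounts honestly for both stages of contention.

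The main obstacle will be step two, namely producing a randomized allocation whose \emph{pointwise} marginals (for every reported $t$, not merely in expectation over $t$) are a constant fraction of $x^\ast_{ij}(t)$ while simultaneously respecting the integer supply constraint and the demand bound $n_i$. The demand constraint $\sum_j x^\ast_{ij}(t) \le n_i$ makes the per-bidder stage realizable (via, e.g., a Carath\'eodory-type decomposition into integral sets of size at most $n_i$), and the cross-bidder supply constraint is converted into a feasibility event via Markov applied to $\sum_i x^\ast_{ij}(t_i)$; the care lies in making these two randomizations independent across $(i,j)$ so that the BIC argument, which needs the marginal rule for each $(i,t)$ to be linear in $x^\ast$, goes through unchanged.
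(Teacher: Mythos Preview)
Your high-level plan is right: LP1 upper-bounds optimal BIC revenue, and one wants a rounding that scales every allocation marginal $x^\ast_{ij}(s)$ by the \emph{same} constant $\alpha$ (independent of $i,j,s$) and charges $\alpha P^\ast_i(s)$, so that the LP's IC and IR inequalities carry over verbatim. You correctly identify that exactness of the marginals, not merely a lower bound, is what makes BIC go through. The gap is that you never actually achieve this exactness, and your factors are inconsistent as a result.

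Concretely: you describe contention resolution as a ``Markov-style argument'' giving each requester a conditional win probability $\ge 1/2$. But $\ge 1/2$ is not enough: if bidder $i$'s truthful report $t$ gets scaling $1/2$ while a misreport $s$ gets scaling $0.7$ (because under $s$ she requests items with less congestion), the IC inequality from LP1 no longer transfers. You acknowledge in your last paragraph that the marginal rule must be \emph{linear} in $x^\ast$, but nothing in the sketch delivers this. This is also the source of the arithmetic muddle: you claim allocation marginal exactly $x^\ast/2$ and payment $P^\ast/2$ (a $2$-approximation), then invoke ``one half per stage'' to reach $4$, which would mean marginal $x^\ast/4$ and payment $P^\ast/4$. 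These cannot both be right.

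The paper's fix is the idea you are missing. First scale $x^\ast$ by $1/2$ to get $\tilde{x}$, so that $X_{ij} := \sum_t f_i(t)\tilde{x}_{ij}(t)$ satisfies $\sum_i X_{ij} \le 1/2$. Process bidders in a fixed order; since earlier bidders' requests are independent of bidder $i$'s report, item $j$ is still available when $i$ is reached with probability exactly $Z_{ij} := \prod_{i'<i}(1-X_{i'j}) \ge 1/2$. When bidder $i$ (reporting $t_i$) tentatively selects $j$ (which happens with probability $\tilde{x}_{ij}(t_i)$ via the grouping you correctly describe), she is actually awarded $j$ only if it is available \emph{and} a further coin with bias $\tfrac{1/2}{Z_{ij}} \le 1$ comes up heads. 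The product is $\tilde{x}_{ij}(t_i)\cdot Z_{ij}\cdot \tfrac{1/2}{Z_{ij}} = x^\ast_{ij}(t_i)/4$, exactly, for every $(i,j,t_i)$. Charging $P^\ast_i(t_i)/4$ then preserves IC and IR and yields revenue $\mathrm{LP1}^\ast/4$. The correction factor $\tfrac{1/2}{Z_{ij}}$ is the device that turns an inequality ($Z_{ij}\ge 1/2$) into an equality; your Markov step establishes only the inequality and stops there.
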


\begin{proof}
Note that $\sum_{i} X_{ij} \le \frac{1}{2}$ for all items $j$, since we scaled down the LP variables. Furthermore, $Z_{ij} = \prod_{i' < i} (1 - X_{ij}) \geq 1 - \sum_{i' < i} X_{ij} \ge  \frac{1}{2}$. This implies $\frac{1/2}{Z_{ij}} \leq 1$.
 
In Step 6, $J$ denotes the set of available items. When bidder $i$ is encountered in the predefined (but arbirtrary) ordering, the set $W_i$ is initialized to $\emptyset$.  In each group $\G_{it_i k}$, a single item $j$ is selected with probability $\tilde{x}_{ij}(t_i)$. If $j \in J$, it is added to $W_i$ with probability $\frac{1/2}{Z_{ij}}$ (which is at most 1), and regardless of the outcome of this random event, $j$ is removed from $J$. 

In the discussion below, all expectations are with respect the bids revealed by bidders $i' < i$ and the random choices made in constructing $W_i$.  Since the values revealed by bidders $i' < i$ are independent, item $j$ is available when bidder $i$ is considered with probability exactly $\prod_{i' < i} (1-X_{ij}) = Z_{ij}$. Therefore:
$$ \Pr[j \in W_i] = Z_{ij} \cdot \tilde{x}_{ij}(t_i) \cdot 1/(2Z_{ij})  = \tilde{x}_{ij}(t_i)/2 = x^*_{ij}(t_i)/4$$ 

By linearity of expectation,  the expected welfare of items allocated to bidder $i$ is precisely $\frac{1}{4} \sum_j v_{jt_i} x^*_{ij}(t_i)$. Since bidder $i$ pays a fixed price $P^*_i(t_i)/4$, both the expected welfare and expected price are scaled down by a factor exactly $4$ relative to the LP values  {\em regardless of her revealed type}. This preserves the incentive compatibility constraints in the LP, and makes the scheme be truthful and satisfy voluntary participation in  expectation over the types of other bidders and the randomness introduced by the mechanism. The theorem follows.
\end{proof}

\begin{figure}[htbp]
\centerline{\framebox{
\begin{minipage}{5.5in}
{\bf All Pay Auction}
\begin{tabbing}
\= 1. \ \ \  \=   Choose an arbitrary but fixed ordering of all  bidders and denote it by $1, \ldots, m$ \\ \\  
 \> 2. \> Find the optimal solution to {\sc LP1}, and  denote the variable values by $\{x^*_{ij}(t), P^*_i(t)\}$ \\ \\
  \> 3. \>  For all bidders $i$, items $j$, types $t$, let \\
  \> \> \ \ \ \ \ \ \ \ \ \ \ \  \= $\tilde{x}_{ij}(t) = x^*_{ij}(t) / 2$, \\ \\
  \> \> \> $X_{ij} = \sum_{t} f_{i}(t) \tilde{x}_{ij}(t)$, \\ \\
  \> \> \> $Z_{ij} = \prod_{i' < i} (1-X_{ij})$ \\ \\
    \> 4. \> \textsc{For} all bidders $i$, types $t$, \\
  \> \> \> Partition the items into $n_i$ disjoint groups s.t. in each group $\G_{itk}$, $\sum_{j \in \G_{itk}} \tilde{x}_{ij}(t) \le 1$ \\ \\
  \> 5. \> Collect the reported types of all bidders. Denote the reported type of bidder $i$ by $t_i$ \\ \\
  \> 6. \> Initialize $J$ to be the set of all items. \\
  \> \> \textsc{For} $i = 1, 2, \ldots m$, \\
  \> \> \> Initialize $W_i \leftarrow \emptyset$ \\
  \> \> \> \textsc{For} $k = 1 \mbox{ to } n_i$ \\
  \> \> \> \ \ \ \ \ \ \ \ \ \ \  \= Pick a single item $j \in \G_{i t_i k}$ w.p. $\tilde{x}_{ij}(t_i)$ \\
  \> \> \> \> \textsc{If} $j \in J$ \\
  \> \> \> \> \ \ \  \ \ \ \ \ \ \ \  \= $W_i \leftarrow W_i \cup \{j \}$ w.p. $\frac{1/2}{Z_{ij}}$ \\ 
  \> \> \> \> \> $J \leftarrow J \setminus \{j\}$  \\
  \> \> \> Bidder $i$ gets the (random) set $W_i$, and pays a (fixed) price $P^*_i(t_i) / 4$
  \end{tabbing}
\end{minipage}
}}
\caption{\label{fig:allpay} BIC Mechanism for correlated valuations} 
\end{figure}

We note that by replacing the objective function with $\sum_{i,j,t} f_{i}(t) v_{jt} x_{ij}(t)$, the resulting scheme is a $4$ approximation to the optimal expected social welfare (or any linear combination of revenue and welfare).

\section{Independent Valuations:  DSIC Mechanisms}
\label{sec:seq}
The All Pay Auction (Figure~\ref{fig:allpay}) satisfies the incentive compatibility constraints only in expectation over the valuations of the other bidders and randomness introduced by the mechanism. In practice, it is more desirable to consider mechanisms that satisfy these properties ex-post. As mentioned before (Section~\ref{sec:results}), in general, it is likely to be hard to approximate such mechanisms, and this hardness comes from the type space of a bidder being correlated.  However, we show that for the important special case of product distributions with the hazard rate condition, a constant factor approximation follows from sequential posted price mechanisms. Such a mechanism considers the bidders sequentially in arbitrary order, and for each bidder, posts a subset of remaining items at certain price for each item, and lets the bidder choose the utility maximizing bundle of these items. Such a scheme is ex-post incentive compatible by definition.

Formally, there are $n$ distinct indivisible goods each of unit quantity, and $m$ bidders. Bidder $i$ has publicly known budget $B_i$ on the total price she is willing to pay. Her utility for paying more than this price is $-\infty$; as long as the budget constraint is satisfied, the utility for any item is quasi-linear, {\em i.e.}, the valuation obtained minus price, thresholded below at zero. The valuations for different items are additive, and she can buy more than one item, but at most some $n_i \ge 1$ items. Her valuation for item $j$ is a positive integer-valued random variable $v_{ij} \in [1,L_{ij}]$ following publicly known distributions.  Though we assume the distributions are defined at positive integer values, this is w.l.o.g., since we can discretize continuous distributions in powers of $(1+\EE)$ and apply the same arguments. This also holds when the values taken by the random variables are not polynomially bounded.

\medskip
\noindent {\bf Key Assumption.} We will present ex-post incentive compatible mechanisms under the following key {\em product distribution} assumption: The valuations $v_{ij}$ for each $(i,j)$ follow independent distributions.

Under the above assumption we consider two cases: (1) The random variables $v_{ij}$ satisfy the monotone hazard rate condition (Section~\ref{sec:mon}), and (2) The random variables $v_{ij}$ are arbitrary integer valued variables over the domain $[1,L]$ (Section~\ref{sec:gen}). In the former case, we show that sequential posted prices are a $O(1)$ approximation to the optimal BIC mechanism (Theorem~\ref{thm:optitem}), whereas in the latter case, they are a $O(\log L)$ approximation, and doing better necessarily requires pricing bundles of items rather than individual items (Theorem~\ref{thm:gen}).

\medskip
\noindent {\bf Proof Sketch of Theorem~\ref{thm:optitem}.} 
First, observe that in order to enforce individual rationality, whenever item $j$ is allocated to bidder $i$, the expected revenue generated across edge $(i,j)$ is at most $\V_{ij} = \min(v_{ij}, B_i)$. Let $x_{ij}(r)$ denote the probability that bidder $i$ gets item $j$ when $\V_{ij} = r$. If we require the demand, budget and supply constraints to hold {\em only in expectation}, the revenue of any mechanism can be upper bounded by a linear program with variables $x_{ij}(r)$ (Lemma~\ref{lem:crux}). Interestingly, this linear program ({\sc LPRev}) does not encode incentive compatibility constraints. Thus, any DSIC scheme that is constant competitive against {\sc LPRev} will also be a $O(1)$ approximation to optimal BIC revenue. 
 
Note that if $v_{ij}$ satisfies MHR, then $\V_{ij}$ also satisfies MHR (Claim~\ref{claim1}). Next, we make the crucial observation that if the prior on a bidder's valuation is MHR, then with constant probability her virtual valuation is within a constant factor of her valuation (Lemma~\ref{lem:hazard}). Thus, replacing $\V_{ij}$ by the corresponding virtual valuation function $\varphi_{ij}$, we get a new linear program ({\sc LP2}) that is still a good approximation to {\sc LPRev} and hence to optimal revenue (Lemma~\ref{lem:LP2}). Using Myerson's characterization of virtual valuations~\cite{myerson}, we show that the optimal solution to {\sc LP2} has a nice structure (Lemma~\ref{lem:round}). Roughly speaking, it treats each edge $(i,j)$ as a separate bidder with valuation given by the random variable $\V_{ij}$. Finally, we employ an interesting rounding technique (Figure~\ref{fig:postprice}) on the optimal solution to {\sc LP2}, and get a DSIC item pricing scheme that looses a constant factor in revenue. Since we need to apply Markov's inequality during the rounding phase, we revisit the definition of $\V_{ij}$ and scale down the reported budget by a suitable factor (Definition~\ref{def:scale}). This implies {\sc LPRev} is an upper bound on optimal revenue up to a constant factor; rest of the proof remains the same. 

One issue with auctions where bidders are budget and demand constrained is that given posted prices for the items, the bidder needs to solve a two-dimensional {\sc Knapsack} problem to determine her optimal bundle, and this is {\sc NP-Hard} in general. However, this aspect does not change our results: Our upper bound {\sc LP2} is the best possible revenue regardless of what optimization the bidders perform, and furthermore, our analysis of the algorithm in Figure~\ref{fig:postprice} simply shows that with constant probability, the bidder solves a trivial knapsack instance where all items fit into the knapsack. Therefore, as long as the bidder uses any reasonable {\sc Knapsack} heuristic, our results hold.

\newcommand{\f}{\tilde{f}}

\subsection{Linear Programming Relaxation}
We first present a simple bound on the revenue of any BIC mechanism. In the ensuing discussion, we will denote bidders by $i$ and items by $j$. 

\begin{define}
\label{def:scale}
Define $\V_{ij} = \min(v_{ij}, B_i/4)$ and let its density function be $f_{ij}$.
\end{define}

\[ \mbox{Maximize } \ \ \ \sum_{i,j} \sum_r  r f_{ij}(r) x_{ij}(r) \qquad \mbox{({\sc LPRev})} \]
\[ \begin{array}{rcllr}
\sum_{j} \sum_r f_{ij}(r) x_{ij}(r)  & \le & n_i & \forall i  & \hfill (1)\\
\sum_{j} \sum_r r f_{ij}(r)  x_{ij}(r) & \le & B_i & \forall i & \hfill (2) \\
\sum_{i} \sum_r f_{ij}(r) x_{ij}(r)  & \le & 1 & \forall j  & \hfill (3)\\
x_{ij}(r) & \in & [0,1] & \forall i, j, r & \hfill (4)\\
\end{array}  \]

\begin{lemma}
\label{lem:crux}
 The optimal value of the program ({\sc LPRev}) is at least a factor $1/4$ of the revenue of the optimal BIC mechanism. Furthermore, in the LP solution, $x_{ij}(r)$ is a monotonically non-decreasing function of $r$.
\end{lemma}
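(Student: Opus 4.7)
The plan is to prove both claims of Lemma~\ref{lem:crux} by (a) exhibiting a feasible \textsc{LPRev} solution whose value is at least $R/4$ (where $R$ denotes the revenue of the optimal BIC mechanism), and (b) obtaining monotonicity by a local-exchange argument. Fix any BIC mechanism $M$ with allocation probabilities $q_{ij}(v)\in[0,1]$ (over $M$'s own randomness) and expected prices $p_i(v)$, and write $\V'_{ij}:=\min(v_{ij},B_i)$ for the ``unscaled'' truncation. The cornerstone observation is that for any deterministic allocation $A(v)$ that $M$ may pick, individual rationality together with the budget constraint gives
\[
p_i\bigl(v,A(v)\bigr) \;\le\; \min\!\Bigl(B_i,\;\sum_{j\in A(v)} v_{ij}\Bigr) \;\le\; \sum_{j\in A(v)} \V'_{ij},
\]
where the second inequality is a two-case check (if some allocated item has $v_{ij}>B_i$ then $\V'_{ij}=B_i$ alone suffices, otherwise $\V'_{ij}=v_{ij}$ throughout $A(v)$). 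This lets me distribute $p_i(v,A(v))$ among the allocated items in proportion to $\V'_{ij}$, producing attributions $\pi_{ij}(v)\ge 0$ with $\sum_j \pi_{ij}(v)=p_i(v)$ and, in expectation over $M$'s randomness, $\pi_{ij}(v)\le \V'_{ij}\,q_{ij}(v)$.

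I then set $y_{ij}(v):=\pi_{ij}(v)/\V'_{ij}$ (zero if $\V'_{ij}=0$), which lies in $[0,q_{ij}(v)]$ and satisfies $\sum_j \V'_{ij}\,y_{ij}(v)=p_i(v)$, and define the LP candidate $x_{ij}(r):=\E_v\bigl[y_{ij}(v)\,\big|\,\V_{ij}(v)=r\bigr]$. Feasibility is then routine via the tower property: $y_{ij}\le q_{ij}$ plus $M$'s own supply and demand feasibility yields constraints~(1) and~(3), and constraint~(2) reduces to the chain $\sum_j \V_{ij}(v)\,y_{ij}(v)\le \sum_j \V'_{ij}\,y_{ij}(v)=p_i(v)\le B_i$, using $\V_{ij}\le \V'_{ij}$ from Definition~\ref{def:scale}. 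The LP value is $\sum_{i,j}\E_v[\V_{ij}(v)\,y_{ij}(v)]$; a case analysis on $v_{ij}\le B_i/4$, $B_i/4<v_{ij}\le B_i$, and $v_{ij}>B_i$ gives $\V_{ij}\ge \V'_{ij}/4$ uniformly, so this is at least $\tfrac{1}{4}\sum_{i,j}\E_v[\pi_{ij}(v)]=R/4$. The factor $1/4$ is entirely due to the $B_i/4$ scaling of Definition~\ref{def:scale}; without that scaling the LP would upper-bound $R$ exactly.

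For the monotonicity claim, suppose an optimum $x^*$ satisfies $x^*_{ij}(r_1)>x^*_{ij}(r_2)$ with $r_1<r_2$. Decrease $x^*_{ij}(r_1)$ by $\delta_1$ and increase $x^*_{ij}(r_2)$ by $\delta_2$, chosen so that $\delta_1 r_1 f_{ij}(r_1)=\delta_2 r_2 f_{ij}(r_2)$. Since $x_{ij}(r)$ has coefficient $r f_{ij}(r)$ in both the objective and constraint~(2), both are preserved exactly; the coefficient in constraints~(1) and~(3) is $f_{ij}(r)$, so each of their LHSs strictly \emph{decreases} by $\delta_1 f_{ij}(r_1)(1-r_1/r_2)>0$ and feasibility is maintained. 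Iterating this exchange across inverted pairs $(r_1,r_2)$ removes every inversion and yields a monotone optimum. The main obstacle of the whole argument is not this rearrangement but the construction in part~(a): the tempting choice $x_{ij}(r)=\Pr[i\text{ gets }j\mid \V_{ij}(v)=r]$ would make the LHS of~(2) equal $\sum_j \E_v[\V_{ij}\,q_{ij}(v)]$, which can be as large as $n_i B_i/4$, because a bidder's budget caps only the \emph{total} payment and not the sum of per-item truncated valuations. Reweighting by $y_{ij}(v)=\pi_{ij}(v)/\V'_{ij}$ is precisely what makes this sum telescope to $p_i(v)\le B_i$, and is the technical crux of the lemma.
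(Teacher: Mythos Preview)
Your proof is correct and follows essentially the same approach as the paper: both arguments distribute each bidder's payment across her allocated items so that the per-item share is bounded by $\min(v_{ij},B_i)$, then use $\V_{ij}\ge \tfrac14\min(v_{ij},B_i)$ to pass to the \textsc{LPRev} variables with a factor-$4$ loss; your reweighting $y_{ij}=\pi_{ij}/\V'_{ij}$ is just an explicit version of the paper's step ``decrease $x_{ij}(r)$ and increase $p_{ij}(r)$ until $p_{ij}(r)=r$.'' The monotonicity argument is likewise identical---preserve the $r$-weighted sum $\sum_r r f_{ij}(r)x_{ij}(r)$ by shifting mass from a smaller $r$ to a larger one, which strictly decreases the unweighted sum $\sum_r f_{ij}(r)x_{ij}(r)$ appearing in constraints~(1) and~(3).
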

\begin{proof}
Without any loss of generality, we assume that whenever a bidder is allocated a bundle of items, the total price she has to pay is distributed amongst the individual items obtained. Since the mechanism satisfies the individual rationality condition in expectation, it is easy to ensure that the expected price on a single item is never greater than the valuation for the item or the overall budget.  Let $x_{ij}(r)$ denote the expected amount of item $j$ bidder $i$ obtains when $\V_{ij} = r$, and let $p_{ij}(r)$ denote the expected price paid conditioned on obtaining the item. It is easy to see that $p_{ij}(r) \leq 4r$. The revenue of the optimal BIC mechanism can be relaxed as:

\[ \mbox{Maximize } \ \ \ \sum_{i,j} \sum_r  p_{ij}(r) f_{ij}(r) x_{ij}(r)  \]
\[ \begin{array}{rcllr}
\sum_{j} \sum_r f_{ij}(r) x_{ij}(r)  & \le & n_i & \forall i  & \hfill (1)\\
\sum_{j} \sum_r p_{ij}(r) f_{ij}(r)  x_{ij}(r) & \le & B_i & \forall i & \hfill (2) \\
\sum_{i} \sum_r f_{ij}(r) x_{ij}(r)  & \le & 1 & \forall j  & \hfill (3)\\
x_{ij}(r) & \in & [0,1] & \forall i, j, r & \hfill (4)\\
p_{ij}(r) & \in & [0,4r] & \forall i, j, r & \hfill (5)\\
\end{array}  \]
The above program is nonlinear. Scale $p_{ij}(r)$ down by a factor of $4$ so that $p_{ij}(r) \le r$. This preserves the constraints and loses a factor of $4$ in the objective.  Now, for all $i, j$, if $p_{ij}(r) < r$, increase $p_{ij}(r)$ and decrease $x_{ij}(r)$ while preserving their product until $p_{ij}(r)$ becomes equal to $r$. This yields the constraints of {\sc (LPRev)}, and preserves the objective; but now,  the objective becomes $ \sum_{i,j} \sum_r  r f_{ij}(r) x_{ij}(r)$. This shows {\sc (LPRev}) is a $4$-approximation to the revenue of the optimal BIC mechanism.

To show that the objective is maximized when the $x_{ij}(r)$ are monotonically non-decreasing in $r$, for any $(i,j)$, preserve $\sum_{r}  r f_{ij}(r) x_{ij}(r)$ by increasing $x_{ij}(s_2)$ and decreasing $x_{ij}(s_1)$ for $s_1 < s_2$. In this process, $\sum_r f_{ij}(r) x_{ij}(r)$ must decrease, preserving all the constraints, which implies the monotonicity.
\end{proof}

We note that because of the presence of budget constraints, {\sc (LPRev)} bounds only the expected revenue of the ex-post mechanism and {\em not} the expected social welfare, which can be larger by an unbounded amount. However, if the budget constraints are removed, the resulting LP also bounds the optimal social welfare. 

\subsection{Monotone Hazard Rates}
\label{sec:mon}
We will now present a constant factor approximation to the optimal BIC mechanism via sequential posted price schemes, assuming the random variables $v_{ij}$ satisfy the monotone hazard rate condition. 
\begin{define}
The distribution of $v_{ij}$ satisfies the monotone hazard rate  (MHR) condition if $\Pr[v_{ij} > r]/Pr[v_{ij} = r]$ is a non-increasing function of $r$. 
\end{define}

\begin{claim}
\label{claim1}
If $v$ is a random variable satisfying the monotone hazard rate condition, then $\min(v,a)$ satisfies the  MHR condition for any  integer $a \ge 1$.
\end{claim}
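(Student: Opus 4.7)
The plan is to unpack the definition of $\min(v,a)$ directly, compute its hazard ratio piecewise, and check non-increasingness by cases. Let $w = \min(v,a)$ and, mirroring the paper's notation, consider the ratio $h_w(r) := \Pr[w > r]/\Pr[w = r]$ at each integer $r$ in the support of $w$, which is $\{1,2,\ldots,a\}$.

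First I would handle the region $r < a$. Here the event $\{w = r\}$ coincides exactly with $\{v = r\}$, and $\{w > r\}$ coincides with $\{v > r\}$, so $h_w(r) = h_v(r)$. Hence on $\{1,\ldots,a-1\}$ the sequence $h_w$ is literally a prefix of $h_v$, and inherits the non-increasing property of $v$ from the MHR assumption.

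Next I would handle the boundary at $r = a$. Since $w$ is capped at $a$, we have $\Pr[w > a] = 0$ while $\Pr[w = a] = \Pr[v \geq a] > 0$, so $h_w(a) = 0$. Thus going from $r = a-1$ to $r = a$ the ratio passes from the nonnegative value $h_v(a-1) = \Pr[v > a-1]/\Pr[v = a-1]$ down to $0$, which preserves monotonicity. There is nothing to check for $r > a$, as those points lie outside the support of $w$.

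Combining the two cases, $h_w$ is non-increasing on the whole support of $w$, which is the MHR condition for $\min(v,a)$. I do not anticipate any real obstacle: the only subtlety is the single sentence verifying that the drop to zero at the cap is consistent with non-increasingness, and integrality of $a$ is only used to keep the discretization clean (the continuous analogue goes through identically with $1 - F$ replacing $\Pr[v > r]$).
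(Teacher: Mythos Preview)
Your argument is correct: the key observation that $h_w$ agrees with $h_v$ on $\{1,\ldots,a-1\}$ and drops to $0$ at $r=a$ is exactly what is needed, and the case analysis is clean. The paper itself states Claim~\ref{claim1} without proof (it is treated as immediate), so your direct verification is precisely the kind of one-paragraph check the authors are implicitly leaving to the reader.
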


It follows that the  $\V_{ij}$ also satisfy the MHR condition.

\begin{define}[Myerson~\cite{myerson}]
\label{def:virt}
Let $G_{ij}(r)  = \Pr[\V_{ij} > r]$.  Define the {\em virtual valuation} as $\varphi_{ij}(r) = r - G_{ij}(r)/f_{ij}(r)$. This is said to be {\em regular} if it is a non-decreasing function of $r$.  
\end{define}

Clearly, MHR distributions are regular. We now present the crucial lemma for MHR distributions. 
\begin{lemma}
\label{lem:hazard}
Let $\phi(v)$ denote the virtual valuation function for a positive integer valued random variable $v \in [1,L]$ satisfying the MHR condition. Then, $\Pr[\phi(v) \ge \frac{v}{2}] \ge \frac{1}{e^2} $.
\end{lemma}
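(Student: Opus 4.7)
\smallskip
\noindent \textbf{Proof proposal.} The plan is to recast the event $\phi(v) \ge v/2$ as a threshold condition on the quantity $\tilde h(r) := f(r)/G(r)$, and then exploit MHR monotonicity through a telescoping product. The inequality $\phi(r) \ge r/2$ rearranges as $r f(r) \ge 2 G(r)$, i.e.\ $r \tilde h(r) \ge 2$. The MHR hypothesis states that $G(r)/f(r)$ is non-increasing, equivalently $\tilde h$ is non-decreasing; hence $r \tilde h(r)$ is non-decreasing on the support, and $\{r : \phi(r) \ge r/2\}$ is an up-set $\{r \ge r^*\}$, where $r^*$ is the smallest integer with $r^* \tilde h(r^*) \ge 2$. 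Such an $r^*$ always satisfies $r^* \le L$ because $G(L)=0$ forces $\tilde h(L)=\infty$. It therefore suffices to show $\Pr[v \ge r^*] \ge 1/e^2$.

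Next I would derive a telescoping identity. From $\Pr[v \ge s+1] = \Pr[v \ge s] - f(s)$ and $\tilde h(s) = f(s)/\Pr[v \ge s+1]$, a short manipulation yields $\Pr[v \ge s+1]/\Pr[v \ge s] = 1/(1+\tilde h(s))$, so
\[ \Pr[v \ge r^*] \;=\; \prod_{s=1}^{r^*-1} \frac{1}{1+\tilde h(s)} \]
(the case $r^*=1$ gives probability $1$ trivially). The key step is now to use MHR to bound every factor \emph{uniformly}. By definition of $r^*$, $(r^*-1)\tilde h(r^*-1) < 2$, and monotonicity of $\tilde h$ gives $\tilde h(s) \le \tilde h(r^*-1) < 2/(r^*-1)$ for every $s \le r^*-1$. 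Substituting,
\[ \Pr[v \ge r^*] \;>\; \Bigl( \tfrac{r^*-1}{r^*+1} \Bigr)^{r^*-1}. \]

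To finish I would invoke the elementary inequality $\log(1-x) \ge -x/(1-x)$ (valid for $x \in [0,1)$) with $x = 2/(r^*+1)$, giving $(r^*-1)\log\bigl((r^*-1)/(r^*+1)\bigr) \ge -2$ and hence $\Pr[v \ge r^*] > e^{-2}$. The main obstacle is conceptual rather than technical: the na\"ive pointwise inequality $\tilde h(s) < 2/s$ one can read off from the definition of $r^*$ only yields $\Pr[v \ge r^*] > 2/(r^*(r^*+1))$, which degrades to $0$. It is the MHR-enhanced \emph{uniform} bound $\tilde h(s) < 2/(r^*-1)$ across all $s \le r^*-1$ that converts the product into a constant tail. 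This is the discrete analogue of the standard continuous MHR calculation in which $h(v) \le h(r^*) = 2/r^*$ for $v \le r^*$ yields $1-F(r^*) \ge \exp(-\int_0^{r^*} h) \ge e^{-2}$.
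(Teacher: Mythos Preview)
Your proposal is correct and follows essentially the same approach as the paper: both arguments identify the threshold $r^*$ (the paper writes $k = r^*-1$), use MHR to obtain the uniform hazard bound $\tilde h(s) \le 2/(r^*-1)$ for all $s < r^*$, and then lower-bound the tail probability $\Pr[v \ge r^*]$. The only technical difference is that the paper embeds the discrete distribution into a piecewise-constant continuous density and appeals to the formula $G(k) = \exp(-\int \hat h)$, whereas you stay entirely discrete via the telescoping product $\prod_s 1/(1+\tilde h(s))$ and the elementary inequality $\log(1-x) \ge -x/(1-x)$; your route is arguably cleaner and sidesteps the minor question of why the continuous hazard $\hat h$ inherits the bound $2/k$.
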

\begin{proof}
Let $f(t) = \Pr[v = t]$ and $G(t) = \Pr[v > t]$. Let $h(t) = f(t)/G(t)$ denote the hazard rate at $t$. Since $f$ satisfies MHR, $\phi(t) <  t/2$ iff $t \le k$ for some integer $1 \le k \le L$. Furthermore, $\phi(k) = k - 1/h(k) \le k/2$. This implies for all $t \le k$, $h(t) \le h(k) \le 2/k$. 

Therefore, $\Pr[\phi(t) \ge t/2] = \Pr[t > k]$. To bound the latter quantity, replace $f$ with a continuous-valued distribution $\hat{f}$ whose density in $[t,t+1)$ for any integer $t \in [1,L]$ is precisely $f(t)$. It is easy to see that this new density also satisfies the hazard rate condition. Let $\hat{G}(t) = 1 - \int_{q=1}^{t+1} \hat{f}(q) dq$, and let $\hat{h}(t) = \frac{\hat{f}(t)}{\hat{G}(t)}$. Thus,
\begin{equation*}
 \Pr[\phi(t) \ge t/2] = \Pr[t > k] = G(k)  = \hat{G}(k) 
 =   e^{-\int_{t=1}^{k+1} \hat{h}(t) dt} \ge  e^{-\int_{t=1}^{k+1} \frac{2}{k} dt} \ge  e^{-2}
\end{equation*}
 \end{proof}
 
\begin{observation}
Let $v^*_{ij} = \mbox{argmin}_r \{\varphi_{ij}(r) \ge \frac{r}{2}\}$. Then, Lemma~\ref{lem:hazard} and Claim~\ref{claim1} imply $\Pr[\V_{ij} \ge v^*_{ij}] \ge e^{-2}$.
\end{observation}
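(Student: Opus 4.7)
The plan is to chain together Claim~\ref{claim1} and Lemma~\ref{lem:hazard} applied to the random variable $\V_{ij}$, and to translate the conclusion of that lemma into the form stated in the observation.

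First I would invoke Claim~\ref{claim1} with the MHR random variable $v_{ij}$ and threshold $a = \lfloor B_i/4 \rfloor$ (or the appropriate integer-valued cap, invoking the discretization convention announced earlier) to conclude that $\V_{ij} = \min(v_{ij}, B_i/4)$ itself satisfies the monotone hazard rate condition. Since by construction $\V_{ij}$ is a positive integer valued random variable supported on $[1, B_i/4]$, Lemma~\ref{lem:hazard} applies directly to $\V_{ij}$ with the virtual valuation function $\varphi_{ij}$, yielding
\[
\Pr\!\left[\varphi_{ij}(\V_{ij}) \ge \tfrac{\V_{ij}}{2}\right] \;\ge\; e^{-2}.
\]

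Next I would translate this probability into the form $\Pr[\V_{ij} \ge v^*_{ij}]$ by exploiting the monotone structure uncovered in the proof of Lemma~\ref{lem:hazard}. That proof shows that, for an MHR distribution, the set $\{r : \varphi_{ij}(r) < r/2\}$ is precisely $\{1, 2, \ldots, k\}$ for some integer $k$; equivalently $\{r : \varphi_{ij}(r) \ge r/2\}$ is upward closed, so its minimum element $v^*_{ij}$ exists and the event $\{\varphi_{ij}(\V_{ij}) \ge \V_{ij}/2\}$ coincides with $\{\V_{ij} \ge v^*_{ij}\}$. Combining this identification with the bound above yields $\Pr[\V_{ij} \ge v^*_{ij}] \ge e^{-2}$, which is the claimed statement.

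There is no real obstacle here: the observation is a bookkeeping consequence of the two prior results. The only thing to be careful about is verifying that the hypotheses of Lemma~\ref{lem:hazard} (positive integer support, MHR) transfer to $\V_{ij}$, which is exactly what Claim~\ref{claim1} is designed to supply, and that the threshold characterization $v^*_{ij} = \min\{r : \varphi_{ij}(r) \ge r/2\}$ is well defined, which follows from the upward-closedness observation above. Since this is flagged as an observation rather than a lemma, a couple of sentences of justification along these lines should suffice in the final write-up.
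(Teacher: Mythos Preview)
Your proposal is correct and matches the paper's intent: the paper states this as an observation with no proof at all, simply noting it follows from Lemma~\ref{lem:hazard} and Claim~\ref{claim1}. Your two-step argument (Claim~\ref{claim1} gives MHR for $\V_{ij}$, then Lemma~\ref{lem:hazard} plus the upward-closedness of $\{r:\varphi_{ij}(r)\ge r/2\}$ identifies the event with $\{\V_{ij}\ge v^*_{ij}\}$) is exactly the unpacking the reader is expected to perform.
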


\subsubsection{Incorporating Virtual Valuations}  
Now consider the following linear program obtained from ({\sc LPRev}).
\[ \mbox{Maximize } \ \ \ \sum_{i,j} \sum_r  f_{ij}(r)  \varphi_{ij}(r) x_{ij}(r)  \qquad \mbox{({\sc LP2})} \]
\[ \begin{array}{rcllr}
\sum_{j} \sum_r f_{ij}(r) x_{ij}(r)  & \le & n_i & \forall i  & \hfill (1)\\
\sum_{j} \sum_r f_{ij}(r)  \varphi_{ij}(r) x_{ij}(r) & \le & B_i & \forall i & \hfill (2) \\
\sum_{i} \sum_r f_{ij}(r) x_{ij}(r)  & \le & 1 & \forall j  & \hfill (3)\\
x_{ij}(r) & \in & [0,1] & \forall i, j, r & \hfill (4) 
\end{array} \]
 
 \begin{lemma}
 \label{lem:LP2}
 The value of  {\sc (LP2)} is at least $\frac{1}{2e^2}$ times the value of {\sc (LPRev)}. 
 \end{lemma}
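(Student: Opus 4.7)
The plan is to exhibit a feasible solution to (LP2) built from the optimal solution $x^*$ of (LPRev) whose objective is at least $\tfrac{1}{2e^2}$ of the LPRev optimum. For each pair $(i,j)$, let $v^*_{ij} = \mbox{argmin}_r\{\varphi_{ij}(r) \ge r/2\}$ as in the Observation preceding this lemma, and define $\tilde{x}_{ij}(r) = x^*_{ij}(r)$ if $r \ge v^*_{ij}$ and $\tilde{x}_{ij}(r) = 0$ otherwise. Feasibility of constraints (1), (3), (4) is immediate because $\tilde{x} \le x^*$ pointwise. For the budget constraint (2), observe that $\V_{ij}$ is a positive integer-valued random variable, so $\varphi_{ij}(r) = r - G_{ij}(r)/f_{ij}(r) \le r$ always, and moreover $\varphi_{ij}(r) \ge r/2 \ge 0$ on the support of $\tilde{x}_{ij}$. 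Hence $\sum_{j,r} f_{ij}(r)\varphi_{ij}(r)\tilde{x}_{ij}(r) \le \sum_{j,r} r f_{ij}(r) x^*_{ij}(r) \le B_i$.

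For the objective, the proof of Lemma~\ref{lem:hazard} shows that under MHR the set $\{r : \varphi_{ij}(r) < r/2\}$ is a downward closed interval $\{1,\ldots,k\}$, so $\varphi_{ij}(r) \ge r/2$ for every $r \ge v^*_{ij}$. Therefore
\[
\sum_{i,j,r} f_{ij}(r)\varphi_{ij}(r)\tilde{x}_{ij}(r) \;\ge\; \frac{1}{2} \sum_{i,j}\sum_{r \ge v^*_{ij}} r\, f_{ij}(r)\, x^*_{ij}(r).
\]
What remains is to show that the right-hand side captures at least a $1/e^2$ fraction of the LPRev objective $\sum_{i,j,r} r f_{ij}(r) x^*_{ij}(r)$, in each coordinate $(i,j)$.

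This is where the two ingredients must be combined. Fix $(i,j)$ and consider $F(\V_{ij}) := \V_{ij}\cdot x^*_{ij}(\V_{ij})$. By Lemma~\ref{lem:crux}, $x^*_{ij}$ is non-decreasing in $r$, so $F$ is non-decreasing as a product of non-negative non-decreasing functions. Therefore $\E[F\mid \V_{ij}\ge v^*_{ij}] \ge F(v^*_{ij}) \ge \E[F\mid \V_{ij} < v^*_{ij}]$, and by the total-expectation identity $\E[F\mid \V_{ij}\ge v^*_{ij}] \ge \E[F]$. Multiplying by $\Pr[\V_{ij}\ge v^*_{ij}]$ and applying Claim~\ref{claim1} together with Lemma~\ref{lem:hazard} (which gives $\Pr[\V_{ij}\ge v^*_{ij}]\ge e^{-2}$) yields
\[
\sum_{r \ge v^*_{ij}} r f_{ij}(r) x^*_{ij}(r) \;=\; \E\bigl[F\cdot \mathbf{1}[\V_{ij}\ge v^*_{ij}]\bigr] \;\ge\; e^{-2}\,\E[F] \;=\; e^{-2} \sum_{r} r f_{ij}(r) x^*_{ij}(r).
\]
Summing over $(i,j)$ and combining with the previous inequality gives the desired factor $\tfrac{1}{2e^2}$.

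The main obstacle, and the only genuinely delicate step, is coupling the MHR probability bound with the monotonicity of $x^*_{ij}$: truncating mass at $v^*_{ij}$ discards a constant fraction of the support, and without the monotonicity-based conditional-expectation comparison one cannot rule out that most of the LPRev revenue on edge $(i,j)$ lives on the discarded low-value region. Both Lemma~\ref{lem:crux} (monotonicity) and Lemma~\ref{lem:hazard} (MHR tail bound) are thus essential.
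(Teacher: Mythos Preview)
Your proof is correct and follows the same approach as the paper: truncate the optimal LPRev solution below $v^*_{ij}$, then pass from $r$ to $\varphi_{ij}(r)$ on the surviving support. The paper's proof is more terse---it asserts the truncation loses at most $e^2$ ``since $x_{ij}(r)$ is monotone in $r$'' without spelling out the conditional-expectation argument---whereas you make this step explicit via the observation that $F(r)=r\,x^*_{ij}(r)$ is non-decreasing so that $\E[F\mid \V_{ij}\ge v^*_{ij}]\ge \E[F]$; this is exactly the justification the paper's one-liner relies on.
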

 \begin{proof}
In the optimal solution to {\sc (LPRev)}, since $\Pr[\V_{ij} \ge v^*_{ij}] \ge e^{-2}$ by Lemma~\ref{lem:hazard} and since $x_{ij}(r)$ is monotone in $r$, setting $x_{ij}(r) = 0$ for $r < v^*_{ij}$ loses a factor of at most $e^2$ in the objective value and preserves feasibility. Now replacing $r$ by $\varphi_{ij}(r) \in [r/2,r]$ for $r \ge v^*_{ij}$ loses another factor of $2$ by Lemma~\ref{lem:hazard}, and preserves feasibility.  This shows a feasible solution to ({\sc LP2}) of value at least $\frac{1}{2e^2}$ times that of {\sc (LPRev)}.
 \end{proof}
 
 \begin{lemma}
 \label{lem:round}
The optimal solution to {\sc (LP2)} is as follows: For each $(i,j)$, we have a convex combination of two solutions. The first (resp. second) solution has variables $y_{ij}(r)$ (resp. $z_{ij}(r)$), and value $r^*_{ij} \le B_i/4$ (resp. $s^*_{ij} = r^*_{ij}+1 \le B_i/4$) so that if $r < r^*_{ij}$, then $y_{ij} = 0$, else $y_{ij} = 1$. Similarly, if $r < s^*_{ij}$, then $z_{ij} = 0$, else $z_{ij} = 1$. If the first solution has weight $p_{ij}$ in the convex combination, and the second $1-p_{ij}$, then: 
\begin{equation}
\label{eq1}
 \sum_r f_{ij}(r)  \varphi_{ij}(r) x_{ij}(r) 
 = p_{ij} r^*_{ij} \Pr[\V_{ij} \ge r^*_{ij}] + (1-p_{ij}) s^*_{ij} \Pr[\V_{ij} \ge s^*_{ij}] 
\end{equation}
 \end{lemma}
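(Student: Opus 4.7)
} The overall strategy is an exchange argument: start with an arbitrary optimal solution $x^*$ of ({\sc LP2}), and for each pair $(i,j)$ rebalance $\{x^*_{ij}(r)\}_r$ into a ``threshold'' solution without decreasing the objective or violating any constraint. The key structural observation is that the coefficient of $x_{ij}(r)$ in the objective equals its coefficient in the budget constraint~(2), namely $f_{ij}(r)\varphi_{ij}(r)$, while in the demand~(1) and supply~(3) constraints the coefficient is $f_{ij}(r)$. Hence along the ``direction'' of $x_{ij}(r)$, the ratio objective-gain / demand-supply-cost is precisely $\varphi_{ij}(r)$, and by the MHR assumption together with Claim~\ref{claim1} (so $\V_{ij}$ is MHR hence regular) this ratio is non-decreasing in $r$. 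Consequently, among all settings of $\{x_{ij}(r)\}_r$ that realize a given objective/budget contribution $V^*_{ij}:=\sum_r f_{ij}(r)\varphi_{ij}(r)x^*_{ij}(r)$, the one that minimizes the demand/supply usage $\sum_r f_{ij}(r)x_{ij}(r)$ is obtained by a fractional-knapsack-style greedy: set $x_{ij}(r)=1$ starting from the largest $r$ and decreasing, stopping when the accumulated virtual value equals $V^*_{ij}$.

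Carrying out this greedy gives exactly the claimed structure: there is an integer threshold $s^*_{ij}$ with $\tilde{x}_{ij}(r)=1$ for $r\geq s^*_{ij}$, a single possibly-fractional value $\tilde{x}_{ij}(s^*_{ij}-1)=\lambda\in[0,1]$, and $\tilde{x}_{ij}(r)=0$ for $r<s^*_{ij}-1$. Setting $r^*_{ij}=s^*_{ij}-1$, $p_{ij}=\lambda$, $y_{ij}$ the threshold at $r^*_{ij}$, and $z_{ij}$ the threshold at $s^*_{ij}$ realizes $\tilde{x}_{ij}$ as $p_{ij}y_{ij}+(1-p_{ij})z_{ij}$. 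By construction $\tilde{x}_{ij}$ preserves the $(i,j)$-contribution to the objective and to constraint~(2), and weakly decreases the contribution to~(1) and~(3); iterating over all $(i,j)$ yields an optimal solution of the asserted form. The bounds $r^*_{ij},s^*_{ij}\leq B_i/4$ are automatic because $\V_{ij}\in[1,B_i/4]$ by Definition~\ref{def:scale}, so all mass of $f_{ij}$ (and therefore all the thresholds the greedy can produce) lies in that range. Finally, equation~(\ref{eq1}) follows by applying Myerson's revenue identity in its discrete form, $\sum_{r\geq t}f_{ij}(r)\varphi_{ij}(r)=t\cdot\Pr[\V_{ij}\geq t]$, separately to the two threshold components of the convex combination.

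The main subtlety to dispatch is that $\varphi_{ij}(r)$ can be negative for small $r$, which makes ``greedy from the top'' ambiguous if $V^*_{ij}$ itself is negative. I handle this upstream by showing that without loss of generality $x^*_{ij}(r)=0$ whenever $\varphi_{ij}(r)<0$: shifting a small mass $\varepsilon$ out of such an $r$ raises the objective by $-f_{ij}(r)\varphi_{ij}(r)\varepsilon>0$ and raises the budget LHS by the same amount, so if budget~(2) is slack we strictly improve, and if~(2) is tight we compensate by reducing some $x^*_{ij'}(r')$ with $\varphi_{ij'}(r')>0$ (which exists because $V^*_i:=\sum_{j',r'}f_{ij'}(r')\varphi_{ij'}(r')x^*_{ij'}(r')=B_i>0$); this compensating exchange leaves the objective unchanged while only relaxing~(1) and~(3), so it never breaks feasibility. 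With this reduction the greedy above is well-defined and $V^*_{ij}\geq 0$, so the threshold structure is produced cleanly. Degenerate situations --- the top of the support, or $\lambda\in\{0,1\}$ --- are absorbed by allowing $p_{ij}\in\{0,1\}$ so the ``two-threshold'' form includes single thresholds. I expect this reduction step to negative $\varphi_{ij}$ to be the one requiring the most care; everything after it is a direct greedy/Myerson calculation.
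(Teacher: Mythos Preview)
Your proposal is correct and follows essentially the same route as the paper: an exchange argument exploiting that $\varphi_{ij}(r)$ is the ratio of objective/budget weight to demand/supply weight, so pushing mass to larger $r$ while preserving $\sum_r f_{ij}(r)\varphi_{ij}(r)x_{ij}(r)$ can only relax constraints~(1) and~(3), yielding a single-threshold (hence two-step convex-combination) structure, after which Myerson's identity gives~(\ref{eq1}). Your treatment is in fact more careful than the paper's on the negative-$\varphi$ case, which the paper glosses over; the paper also states the exchange direction backwards (it says to increase $x_{ij}(r_1)$ and decrease $x_{ij}(r_2)$ for $r_1<r_2$, which would \emph{increase} $\sum_r f_{ij}(r)x_{ij}(r)$ and produce the wrong threshold shape), though the intended argument is clearly the one you give.
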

 \begin{proof}
For each $(i,j)$ if $x_{ij}(r_1), x_{ij}(r_2) \in (0,1)$ for $r_1 < r_2 \le B_i/4$, then $x_{ij}(r_1)$ can be increased and $x_{ij}(r_2)$ can be decreased to preserve $\sum_r \varphi_{ij}(r) f_{ij}(r) x_{ij}(r)$. Since $\varphi_{ij}(r)$ is a monotonically non-decreasing function of $r$, this implies $\sum_r f_{ij}(r)   x_{ij}(r)$ must decrease in this process, preserving all constraints. When this process terminates, we must have $x_{ij}(r) = 0$ for all $r < r^*$ and $x_{ij}(r) = 1$ for all $r > r^*$ for some $r^*$.  This implies the optimal solution to {\sc (LP2)} can be written in the fashion implied by the lemma, with $r^*_{ij} = r^*$ and $p_{ij} = x_{ij}(r^*)$. By Myerson's characterization of virtual valuations~\cite{myerson}, we have:
$$   \sum_r f_{ij}(r) \varphi_{ij}(r) x_{ij}(r) = \sum_r \left(f_{ij}(r) \left(r x_{ij}(r) - \sum_{s=1}^{r-1} x_{ij}(s)  \right) \right) $$
Since $y_{ij}(r)$ is a $0/1$ step function, it  is easy to see that
$$ \sum_r \left(f_{ij}(r) \left(r y_{ij}(r) - \sum_{s=1}^{r-1} y_{ij}(s)  \right) \right) =  r^*_{ij} \Pr[\V_{ij} \ge r^*_{ij}]$$
A similar equality holds for $z_{ij}$, and taking a convex combination completes the proof of the lemma.
\end{proof}

\subsubsection{Posted Price Mechanism and Analysis} 
\label{sec:rounding}

The posted price auction is described in Figure~\ref{fig:postprice}. Note that though the scheme is randomized, since it is posted price, it is universally and ex-post truthful and individually rational.

\begin{figure}[htbp]
\centerline{\framebox{
\begin{minipage}{5.5in}
{\bf Posted Price Auction}
\begin{tabbing}
\= 1. \ \ \  \=   Choose an arbitrary but fixed ordering of all bidders and denote it by $1, \ldots, m$ \\ \\  
 \> 2. \> Solve {\sc LP2} \\ \\
 \> 3. \> \textsc{For} each $(i,j)$ \\ 
 \> \> \ \ \ \  \ \ \ \ \ \ \ \ \ \ \= Independently pick one of the two solutions in the convex combination (see Lemma~\ref{lem:round})\\
 \> \> \> with probability equal to its weight in the combination.  \\ \\
  \> 4. \> Let $\tilde{r}_{ij}$ denote the threshold in the chosen solution where the allocation function $\tilde{y}_{ij}$  jumps  \\  \> \> from zero to one. \\ \\
  \> 5. \> Initialize $J$ to be the set of all items. \\ 
  \>  \> \textsc{For} $i = 1, 2, \ldots, m$ \\ 
  \> \> \> Initialize $W_i \leftarrow \emptyset$ \\ 
  \> \> \> \textsc{For} each item $j \in J$ \\ 
  \> \> \> \ \ \ \ \ \ \ \ \ \ \ \=  $W_i \leftarrow W_i \cup \{j\}$ w.p. $\frac{1}{4}$ \\ 
  \> \> \> Only the set of items $W_i$ is offered to bidder $i$ \\
  \> \> \> Each $j \in W_i$ is offered at a price $\tilde{r}_{ij}$ \\ 
  \> \> \> Bidder $i$ buys a subset of items $S_i \subseteq W_i$ \\ 
  \> \> \> $J \leftarrow J \setminus S_i$
  \end{tabbing}
\end{minipage}
}}
\caption{\label{fig:postprice} DSIC Mechanism for independent valuations} 
\end{figure}

\begin{theorem}
\label{thm:optitem}
The  revenue of the posted price auction (Figure~\ref{fig:postprice}) is a $O(1)$ approximation to the optimal BIC mechanism, when the valuations of a bidder follow product distributions that satisfy the  MHR condition.
\end{theorem}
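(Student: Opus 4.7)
The plan is to charge the expected revenue of the posted price mechanism, edge by edge, to the corresponding term in the objective of {\sc LP2}. Since Lemmas~\ref{lem:crux} and \ref{lem:LP2} already give ${\sc LP2}\,\ge\,\frac{1}{8e^2}\cdot\mathrm{OPT}$, where $\mathrm{OPT}$ denotes the optimal BIC revenue, it suffices to show the mechanism earns $\Omega(1) \cdot {\sc LP2}$ in expectation. By equation~(\ref{eq1}) in Lemma~\ref{lem:round}, the contribution of edge $(i,j)$ to the {\sc LP2} objective equals exactly $\E\bigl[\tilde{r}_{ij}\cdot \mathbf{1}[\V_{ij} \ge \tilde{r}_{ij}]\bigr]$, where the expectation is over the two-point draw of $\tilde{r}_{ij}$ and over $\V_{ij}$. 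I will lower bound the mechanism's revenue on edge $(i,j)$ by a constant multiple of this.

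First, let $A_{ij}$ be the event that item $j$ is still available when bidder $i$ is considered. For any $i' < i$, the event ``bidder $i'$ buys $j$'' entails $j \in W_{i'}$ and $\V_{i'j} \ge \tilde{r}_{i'j}$; as the $W$-coin flip is independent of $(\V_{i'j}, \tilde{r}_{i'j})$, we have $\Pr[i' \text{ buys } j] \le \tfrac{1}{4}\Pr[\V_{i'j} \ge \tilde{r}_{i'j}]$. Summing over $i' < i$ and invoking constraint~(3) of {\sc LP2} (which, through equation~(\ref{eq1}), reads $\sum_{i'}\Pr[\V_{i'j} \ge \tilde{r}_{i'j}] \le 1$) gives $\Pr[A_{ij}] \ge 3/4$.

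Second---the main obstacle---I must decouple the bidder's budget/demand slack from the edge-$(i,j)$ randomness. Conditional on any $J$ with $j \in J$, define the slack event
\[
  G_i^{-j} \;=\; \Bigl\{T_i^{-j} \le \tfrac{3}{4}B_i\Bigr\} \cap \Bigl\{N_i^{-j} \le n_i - 1\Bigr\},
\]
where $T_i^{-j} = \sum_{j' \in W_i\setminus\{j\}} \tilde{r}_{ij'}\mathbf{1}[\V_{ij'} \ge \tilde{r}_{ij'}]$ and $N_i^{-j}$ is the analogous count. Because each $j' \in J$ independently lands in $W_i$ with probability $1/4$, equation~(\ref{eq1}) together with {\sc LP2} constraints~(1) and (2) gives $\E[T_i^{-j}\mid J] \le \tfrac{1}{4}B_i$ and $\E[N_i^{-j}\mid J] \le \tfrac{1}{4}n_i$; Markov's inequality then yields $\Pr[G_i^{-j}\mid J] \ge 1 - \tfrac{1}{3} - \tfrac{1}{4} = \tfrac{5}{12}$. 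By the product-distribution assumption and the independence of the $W$-coin flips across items, $G_i^{-j}$ is (conditional on $J$) independent of $\mathbf{1}[j \in W_i]$, $\V_{ij}$, and $\tilde{r}_{ij}$. On the joint event $G_i^{-j} \cap \{j \in W_i\} \cap \{\V_{ij} \ge \tilde{r}_{ij}\}$, the set of items in $W_i$ with non-negative surplus has total price at most $\tfrac{3}{4}B_i + \tilde{r}_{ij} \le B_i$ (using $\tilde{r}_{ij} \le B_i/4$, which holds by Lemma~\ref{lem:round} and Definition~\ref{def:scale}) and cardinality at most $n_i$, so this bundle is feasible and utility-maximizing, and bidder $i$ buys $j$ at price $\tilde{r}_{ij}$.

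Multiplying $\Pr[A_{ij}] \ge 3/4$, $\Pr[j \in W_i \mid A_{ij}] = 1/4$, and $\Pr[G_i^{-j}\mid A_{ij}] \ge 5/12$, and using the independence established above, the expected revenue from edge $(i,j)$ is at least $\tfrac{5}{64}\cdot \E[\tilde{r}_{ij}\mathbf{1}[\V_{ij} \ge \tilde{r}_{ij}]]$. Summing over all edges and invoking equation~(\ref{eq1}) identifies this sum with $\tfrac{5}{64}\cdot{\sc LP2}$, and chaining with Lemmas~\ref{lem:crux} and \ref{lem:LP2} yields the claimed $O(1)$-approximation. The genuinely subtle step is the $(-j)$-trick in the third paragraph: without dropping $j$ from $G_i$, the slack event would be badly correlated with whether $j$ is purchased, and it is precisely the cap $\tilde{r}_{ij} \le B_i/4$ built into Definition~\ref{def:scale} that lets us drop $j$ while still guaranteeing bidder $i$ has room for $j$ once the other items are paid for.
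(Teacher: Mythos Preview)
Your proof is correct and follows essentially the same route as the paper: charge the mechanism's revenue edge by edge against the {\sc LP2} objective, and use the $1/4$ subsampling together with constraints (1)--(3) of {\sc LP2}, Markov's inequality, and a union bound to show that with constant probability bidder $i$ has enough supply, demand, and budget slack to buy item $j$ once it is offered at price $\tilde r_{ij}\le B_i/4$. Your surrogate variables $T_i^{-j},N_i^{-j}$ and explicit conditioning on $J$ make the independence from the edge-$(i,j)$ randomness cleaner than the paper's somewhat informal assertion that ``the above statements hold regardless of $v_{ij}$'' (the paper works with the actual purchase indicators $X_{ik},P_{ik}$, which do depend on $v_{ij}$ through the bidder's knapsack choice), and your separation of the supply event $A_{ij}$ from the slack event $G_i^{-j}$ yields a slightly better constant ($5/64$ versus the paper's $1/24$).
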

\begin{proof}
We show that for every pair $(i,j)$, the mechanism achieves a revenue that is within  a constant factor of $ \sum_r f_{ij}(r)  \varphi_{ij}(r) x_{ij}(r) $  in {\sc (LP2)} for the pair. First note that if a price is posted according to the mechanism in Figure~\ref{fig:postprice} just along edge $(i,j)$, then by Equation (\ref{eq1}), the expected revenue is simply:
\begin{equation*}
  p_{ij}  r^*_{ij} \Pr[\V_{ij} \ge r^*_{ij}] + (1-p_{ij}) s^*_{ij} \Pr[\V_{ij} \ge s^*_{ij}]  
 =  \sum_r f_{ij}(r)  \varphi_{ij}(r) x_{ij}(r)
 \end{equation*}
 
We will now show that even in the presence of other items and bidders, a constant factor of this revenue is extracted. For pair $(i,j)$, let $X_{ij}$ be the $0/1$ random variable denoting whether item $j$ is taken by bidder $i$, and let $P_{ij}$ denote the price at which it is taken (which is $0$ if the item is not taken by the bidder). These are random variables that depend on the prior distributions as well as the random choices made in choosing the structured solution and the mechanism. Since the probability that any $(i,j)$ is considered at all is $1/4$, the constraints of  {\sc (LP2)}, when scaled down by factor $4$, imply by linearity of expectation that for all $i,j$:
\begin{equation*}
 \E[\sum_{k \neq j}P_{ik}] \le B_i/4, \qquad  \E[\sum_{k \neq j} X_{ik}] \le n_i/4, \qquad
 \mbox{and} \qquad   \E[\sum_{m \neq i} X_{mj}] \le 1/4  
 \end{equation*}
Note that snce the valuation $v_{ij}$ is independent of other valuations, this implies the above statements hold regardless of $v_{ij}$. Now applying Markov's inequality, we have for all $i,j$:
\begin{equation*}
 \Pr[\sum_{k \neq j}P_{ik} \ge 3B_i/4]  \le 1/3, \qquad   \Pr[\sum_{k \neq j}X_{ik} \ge n_i ]  \le 1/4, \qquad  \mbox{and} \qquad  \Pr[\sum_{m \neq i} X_{mj} \ge 1] \le 1/4
\end{equation*}
By union bounds, this implies that with probability at least $1/6$, we must have: $\sum_{k \neq j}P_{ik} < 3B_i/4$, $\sum_{k \neq j}X_{ik} < n_i$, and $\sum_{m \neq i} X_{mj} = 0$. In this event, item $j$ is offered to bidder $i$ w.p. $1/4$ (using one of two random choices of the posted price from Lemma~\ref{lem:round}). Furthermore, in this event, the bidder must take the item if $v_{ij} \ge \tilde{r}_{ij}$, since  $\tilde{r}_{ij} \le B_i/4$ (so that the bidder has sufficient budget to purchase this item), and the bidder has not exhausted his demand $n_i$. Since the valuation $v_{ij}$ itself is independent of the event that the item $j$ is offered to bidder $i$,  this implies that with probability at least $1/6 \times 1/4$, the posted price mechanism obtains revenue $ \sum_r f_{ij}(r)  \varphi_{ij}(r) x_{ij}(r) $ along each $(i,j)$ (using the definition of the latter quantity from Equation (\ref{eq1})). By linearity of expectation over all $(i,j)$, we have a $O(1)$ approximation.
\end{proof}

\begin{corollary}
\label{cor1}
Under the assumptions of Theorem~\ref{thm:optitem}, if there are no budget constraints, then the revenue of the sequential posted price mechanism is a constant factor approximation to the optimal social welfare.
\end{corollary}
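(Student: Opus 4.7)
The plan is to reuse essentially the entire machinery of Theorem~\ref{thm:optitem}, changing only the top of the chain of inequalities so that the LP upper bound is reinterpreted as a bound on optimal expected social welfare rather than on optimal BIC revenue. Throughout, since there are no budget constraints, I would work with $\V_{ij} := v_{ij}$ (the truncation at $B_i/4$ in Definition~\ref{def:scale} is vacuous), and with the LP obtained from {\sc (LPRev)} by deleting constraint (2). Call this LP {\sc (LPSW)}.

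\textbf{Step 1: {\sc (LPSW)} upper-bounds the optimal expected social welfare.} For any (possibly randomized) mechanism, let $x_{ij}(r)$ be the probability, over all other bidders' valuations and the mechanism's randomness, that bidder $i$ receives item $j$ conditional on $v_{ij}=r$. Then the expected social welfare equals $\sum_{i,j,r} r f_{ij}(r) x_{ij}(r)$, which is exactly the objective of {\sc (LPSW)}. The demand constraint (1) holds because the expected number of items received by $i$ is at most $n_i$, and the supply constraint (3) holds because every item is allocated to at most one bidder. Unlike in Lemma~\ref{lem:crux}, no factor of $4$ is incurred, because the objective is now welfare (not revenue), so no conversion between a price variable and the valuation is needed. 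Also, the monotonicity of $x_{ij}(r)$ in $r$ at optimum follows from the same swap argument as in Lemma~\ref{lem:crux}, which does not use the budget constraint.

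\textbf{Step 2: Transition to virtual valuations and the structural form.} Because the budget cap is vacuous, Claim~\ref{claim1} implies $\V_{ij}$ is MHR whenever $v_{ij}$ is, so Lemma~\ref{lem:hazard} still yields $\Pr[\V_{ij}\ge v^*_{ij}]\ge e^{-2}$ with $\varphi_{ij}(v^*_{ij})\ge v^*_{ij}/2$. Consequently, dropping mass at $r < v^*_{ij}$ and replacing $r$ by $\varphi_{ij}(r)$ for $r\ge v^*_{ij}$ converts the optimal {\sc (LPSW)} solution into a feasible solution of the analogue of {\sc (LP2)} (again with constraint (2) removed) of value at least $\tfrac{1}{2e^2}$ times the optimum of {\sc (LPSW)}, exactly as in Lemma~\ref{lem:LP2}. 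The structural characterization of Lemma~\ref{lem:round} still applies coordinate-wise, giving the same $(r^*_{ij}, s^*_{ij}, p_{ij})$ description per edge.

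\textbf{Step 3: The posted-price mechanism of Figure~\ref{fig:postprice} extracts a constant fraction as revenue.} I would run the identical mechanism, where $\tilde{r}_{ij}$ is the threshold obtained from the chosen structured solution. The analysis of Theorem~\ref{thm:optitem} now requires only the two Markov bounds on $\sum_{k\neq j} X_{ik}$ and $\sum_{m\neq i} X_{mj}$; the Markov bound on $\sum_{k\neq j} P_{ik}$ (the only place the budget constraint entered) is no longer needed, and the corresponding feasibility check ``$\tilde{r}_{ij}\le B_i/4$'' is vacuous. A union bound still shows that with constant probability, item $j$ is offered to bidder $i$ at price $\tilde{r}_{ij}$ and bidder $i$ both has remaining demand and weakly prefers taking it whenever $v_{ij}\ge \tilde{r}_{ij}$. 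Summing the per-edge revenue $p_{ij} r^*_{ij}\Pr[\V_{ij}\ge r^*_{ij}] + (1-p_{ij}) s^*_{ij}\Pr[\V_{ij}\ge s^*_{ij}]$ using Equation~(\ref{eq1}) and linearity of expectation gives posted-price revenue $\ge \Omega(1)\cdot \text{\sc (LP2)} \ge \Omega(1)\cdot \text{\sc (LPSW)} \ge \Omega(1)\cdot \text{OPT social welfare}$.

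\textbf{Main obstacle.} There is no real technical obstacle; the only thing to be careful about is that the LP chain and the Markov-step analysis in the proof of Theorem~\ref{thm:optitem} never secretly exploit the budget. Reassuringly, each use of the budget in Section~\ref{sec:mon} is either a truncation that now does nothing ($\V_{ij}=v_{ij}$) or a probabilistic budget-feasibility check in the rounding that is vacuous in the unbudgeted setting, so the argument goes through with a slightly improved (though still unspecified) constant.
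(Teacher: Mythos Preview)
Your proposal is correct and takes essentially the same approach as the paper: the paper's own proof is the two-line observation that removing constraint (2) from {\sc (LPRev)} makes it an upper bound on optimal social welfare, after which ``the rest of the analysis remains the same.'' You have simply unpacked that second sentence in full detail, correctly noting along the way that the factor-of-$4$ loss in Lemma~\ref{lem:crux} and the budget-feasibility Markov bound in the proof of Theorem~\ref{thm:optitem} both become unnecessary.
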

\begin{proof}
If the Budget Constraint (2)  is removed from {\sc (LPRev)}, it is an upper bound on the optimal social welfare that can be obtained by any mechanism. The rest of the analysis remains the same.
\end{proof}

\subsection{General Distributions}
\label{sec:gen}
The key point shown by Theorem~\ref{thm:optitem} is that the optimal revenue of an ex-post truthful mechanism is approximated to a constant factor by a simple sequential posted price auction, where the prices are posted for each item and each bidder. This crucially used the monotone hazard rate condition. The natural question to ask is whether such a simple mechanism is a good approximation for more general classes of distributions.  Note that even for one bidder, the optimal DSIC mechanism will in general price {\em bundles} of items. We show below that bundle pricing has a logarithmic advantage over item pricing even for regular distributions (see Definition~\ref{def:virt}), and that this gap is tight.

\begin{theorem}
\label{thm:gen}
Suppose $v_{ij} \in [1,L]$, are independent for different $(i,j)$ but do not necessarily satisfy the monotone hazard rate condition. Then, there is a $\Theta(\log L)$ gap between the revenues of the optimal posted price scheme and the optimal DSIC mechanism. The lower bound holds for regular distributions and one bidder.
\end{theorem}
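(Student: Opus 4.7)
The plan is to establish matching $\Theta(\log L)$ bounds. The upper bound parallels Section~\ref{sec:mon}, replacing the constant from Lemma~\ref{lem:hazard} with an $O(\log L)$ loss from dyadic bucketing, while the lower bound instantiates an equal-revenue product distribution where bundle pricing provably beats item pricing.

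\textbf{Upper Bound.} By Lemma~\ref{lem:crux} it suffices to construct an item-pricing scheme whose revenue is $\Omega(1/\log L)$ of $\textsc{LPRev}^*$. Take an optimal solution $\{x^*_{ij}\}$; the monotonicity-in-$r$ exchange argument used in Lemma~\ref{lem:round} applies to \textsc{LPRev} just as it does to \textsc{LP2}, so we may assume each $x^*_{ij}$ is a $0/1$ step function with threshold $r^*_{ij}\le B_i/4$ (working with a convex combination of two such step functions if the LP optimum is fractional, exactly as in Lemma~\ref{lem:round}). The edge contribution is then $A_{ij}=\sum_{r\ge r^*_{ij}} r f_{ij}(r)$. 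Partition $[r^*_{ij},B_i/4]$ into $O(\log L)$ dyadic windows $[2^k r^*_{ij},2^{k+1}r^*_{ij})$; the contribution of window $k$ to $A_{ij}$ is at most $2^{k+1}r^*_{ij}\cdot G_{ij}(2^kr^*_{ij})$, so for some index $k$ the value $\tilde r_{ij}:=2^kr^*_{ij}\ge r^*_{ij}$ satisfies $\tilde r_{ij}\,G_{ij}(\tilde r_{ij})\ge A_{ij}/O(\log L)$.

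Because $\tilde r_{ij}\ge r^*_{ij}$, the step function $\tilde x_{ij}(r):=1$ if $r\ge \tilde r_{ij}$ and $0$ otherwise has smaller marginal $G_{ij}(\tilde r_{ij})\le G_{ij}(r^*_{ij})$ and smaller expected payment $\tilde r_{ij}\,G_{ij}(\tilde r_{ij})\le A_{ij}$; so $\{\tilde x_{ij}\}$ remains feasible for every supply, demand, and budget constraint of \textsc{LPRev} and has objective value at least $\textsc{LPRev}^*/O(\log L)$. Running the mechanism of Figure~\ref{fig:postprice} with posted prices $\tilde r_{ij}$ in place of the MHR-derived thresholds, and reusing verbatim the Markov-based argument in the proof of Theorem~\ref{thm:optitem} (which invokes only the three LP constraints in expectation), extracts a constant fraction of $\sum_{ij}\tilde r_{ij}G_{ij}(\tilde r_{ij})$, yielding the advertised $O(\log L)$-approximation to $\textsc{LPRev}^*$ and hence to OPT.

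\textbf{Lower Bound.} Consider a single unconstrained bidder with $n=L$ items whose valuations $v_j\in\{1,\ldots,L\}$ are i.i.d.\ truncated equal-revenue: $\Pr[v_j\ge t]=1/t$. One checks $\varphi(r)=r-G(r)/f(r)\equiv 0$, so the distribution is regular. Since $p\Pr[v_j\ge p]=1$ for every $p\in[1,L]$, \emph{any} item-pricing scheme yields expected revenue at most $L$. On the other hand $S=\sum_j v_j$ satisfies $\E[S]=LH_L=\Theta(L\log L)$ and, using $\E[v_j^2]=2L-H_L=\Theta(L)$, $\E[S^2]=L\cdot\Theta(L)+L(L-1)H_L^2=\Theta(L^2\log^2 L)$. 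By Paley--Zygmund, $\Pr[S\ge \E[S]/2]\ge(\E[S])^2/(4\E[S^2])=\Omega(1)$, so posting the grand-bundle price $\E[S]/2$ extracts $\Omega(L\log L)$ expected revenue---an $\Omega(\log L)$-factor gap. Bundle pricing is DSIC, completing the lower bound.

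\textbf{Main Obstacle.} The delicate part is preserving LP feasibility during bucketing: naively rounding to the single-edge revenue maximizer can blow up the supply constraint $\sum_i G_{ij}(\tilde r_{ij})\le 1$ (and similarly demand/budget). Restricting the dyadic search to values $\tilde r_{ij}\ge r^*_{ij}$ is exactly what keeps the rounded solution simultaneously \textsc{LPRev}-feasible and $O(\log L)$-competitive, so that the posted-price mechanism of Figure~\ref{fig:postprice} can be invoked as a black box.
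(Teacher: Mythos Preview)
Your proposal is correct and follows essentially the same approach as the paper: the lower bound uses the identical equal-revenue instance (you invoke Paley--Zygmund where the paper uses a Chernoff bound, but both give $\Omega(1)$ concentration for $\sum_j v_j$), and the upper bound uses dyadic bucketing on {\sc LPRev} followed by the Markov-based posted-price analysis of Theorem~\ref{thm:optitem}. The one technical difference is that you first reduce the {\sc LPRev} optimum to step functions and then search dyadically above each threshold $r^*_{ij}$, whereas the paper buckets the raw LP solution directly and retains a fractional offering probability $x^*_{ij}$; your version is arguably cleaner precisely because the restriction $\tilde r_{ij}\ge r^*_{ij}$ makes feasibility preservation immediate, as you note in your ``Main Obstacle'' paragraph.
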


\begin{proof}
To show the lower bound, consider the following scenario: There is only one bidder, $n$ items, and no budget or demand constraints. The valuations are $i.i.d.$ for each item $j$, and follow the common distribution with $G(r) = \Pr[v \ge r] =  \frac{1}{r}$, for $r = 1,2, \ldots, n$, so that $n = L$. This distribution is regular, since $\varphi(r) = -1$ for $r < n$, and $\varphi(n) = n-1$. We have $\E[v_{1j}] = H_n$ for all item $j$, and by Chernoff bounds, 
$$ \Pr[\sum_{j=1}^n v_{1j} \le (1-\epsilon) n H_n] \le e^{\frac{-n H_n \EE^2}{2n}} = o(1)$$
Therefore, a scheme that sets a price of $n H_n (1-\EE)$ for the bundle of $n$ items sells the bundle with probability $1-o(1)$, so that the expected revenue is $\Omega(n \log n)$. Any posted price scheme can extract a revenue of $\max_i i \cdot G(i) = 1$ from each item, so that the expected revenue from $n$ items is at most $n$. This shows a gap of $\Omega(\log n) = \Omega(\log L)$.

The upper bound follows from a standard scaling argument. We start with {\sc (LPRev)} which upper bounds the optimal achievable revenue. For each $(i,j)$, group the $r$ values in powers of $2$ so that there are $O(\log L)$ groups. Let group $\G_k$ denote the interval $[2^k, 2^{k+1})$. 

The first step is to solve {\sc (LPRev)}. Let its optimal value be $OPT$; note that this is a $4$ approximation to the optimal ex-post incentive compatible scheme. Now, for each $(i,j)$ perform the following rounding. Let 
\begin{equation*}
\begin{split} 
& k_{ij} = \mbox{argmax}_k \sum_{r \in \G_k} r f_{ij}(r) x_{ij}(r) \\
& r^*_{ij} = 2^{k_{ij}} \\
& x^*_{ij} = \frac{\sum_{r \in \G_k} f_{ij}(r) x_{ij}(r)}{\sum_{r \in \G_k} f_{ij}(r)} \\
& Q_{ij} = \Pr[\tilde{V}_{ij} \ge r^*_{ij}]
\end{split}
\end{equation*}
Since $x_{ij}(r)$ is monotone in $r$, it is easy to observe that:
\begin{enumerate}
\item $x^*_{ij} \le x_{ij}(r)$ for $r \in \G_k$ for $k > k^*_{ij}$. This implies $x^*_{ij} Q_{ij} \le \sum_r f_{ij}(r) x_{ij}(r)$.
\item $\sum_{r \in \G_{k_{ij}}} r f_{ij}(r) x_{ij}(r) \le 2 r^*_{ij} x^*_{ij} \Pr[r \in \G_{k_{ij}}]$. 
\item $\max_k \sum_{r \in \G_k} r f_{ij}(r) x_{ij}(r) =  \Omega(1/\log L)  \sum_r  r f_{ij}(r) x_{ij}(r) $. 
\item Therefore, $r^*_{ij} x^*_{ij} Q_{ij} = \Omega(1/\log L)  \sum_r  r f_{ij}(r) x_{ij}(r) $. Furthermore, $r^*_{ij} x^*_{ij} Q_{ij}  \le  \sum_r  r f_{ij}(r) x_{ij}(r) $.
\end{enumerate}

The above directly implies $\sum_{i,j} r^*_{ij} x^*_{ij} Q_{ij}  = \Omega(OPT/\log L)$, and the following constraints:
\[\begin{array}{rcll}
 \sum_{j}  x^*_{ij} Q_{ij}  & \le &  n_i &  \forall  i  \\
 \sum_{j} r^*_{ij} x^*_{ij} Q_{ij}  & \le &  B_i  &  \forall  i  \\
 \sum_{i} x^*_{ij} Q_{ij}  & \le &  1  & \forall   j  \\
 x^*_{ij}  & \in &  [0,1] & \forall i, j \\
  r^*_{ij} & \in &  [0,B_i/4] &\forall i,j 
  \end{array} \]
The final mechanism chooses an arbitrary ordering of bidders. When bidder $i$ is encountered, let $S_i$ denote the subset of items that have survived so far. For each item $j \in S_i$, with probability $x^*_{ij}/4$, post price $r^*_{ij}$, and with the remaining probability skip this item. This process is done simultaneously for all items, and the resulting items and prices are simultaneously offered to the bidder. Note that though the scheme is randomized, since it is posted price, it is universally and ex-post truthful and individually rational. Using  the same proof as Theorem~\ref{thm:optitem} now shows that this posted price scheme extracts revenue that is a constant factor of $\sum_{i,j} r^*_{ij} x^*_{ij} Q_{ij}$. This shows a $O(\log L)$ approximation.
\end{proof}

\subsubsection{Adaptive Posted Price Schemes}  The above implies we cannot generalize the result in Section~\ref{sec:mon} even to the class of regular distributions (see Definition~\ref{def:virt}). This is because an $\omega(1)$ gap is introduced in going from {\sc (LPRev)} to {\sc (LP2)}. However, we can show that if the space of mechanisms is restricted, {\sc (LP2)} itself is a good relaxation to the optimal revenue in this space.

In particular, suppose we restrict the space of mechanisms to be those that are posted price, and sequential by bidder: Depending on the subset of items and bidders left, the mechanism adaptively chooses the next bidder and posts  prices for a subset of the remaining items. The bidder being a utility maximizer, solves a knapsack problem to choose the optimal subset of items.  Once this bidder is dealt with, the mechanism again adaptively chooses the next bidder depending on the acceptance strategy of this bidder. Assuming the valuations of a bidder follow a product distribution, and $v_{ij}$ follows a regular distribution, we will show a $O(1)$ approximation to the optimal mechanism.  

Since the optimization problem the bidders need to solve is a two-dimensional {\sc Knapsack} problem which is {\sc NP-Hard}, we we assume bidders are {\em monotone optimizers} in the following sense: For any edge $(i,j)$, if all valuations except $v_{ij}$ are fixed, the quantity of item $j$ taken by bidder $i$ is monotonically non-decreasing in $v_{ij}$. This is true if the bidder solves {\sc Knapsack} optimally. As with the posted price scheme in the previous section, our algorithm itself will allow arbitrary {\sc Knapsack} heuristics as long as a bidder chooses all items if she is not constrained by either demand or budget.

\renewcommand{\r}{\tilde{r}}
\newcommand{\x}{\tilde{x}}
\renewcommand{\p}{\tilde{p}}

\begin{lemma}
{\sc (LP2)} is a $4$-approximation to the revenue of the optimal  adaptive posted price scheme assuming bidders are monotone optimizers.
\end{lemma}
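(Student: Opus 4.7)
The plan is to exhibit a feasible solution to {\sc (LP2)} whose objective is at least $R/4$, where $R$ denotes the expected revenue of the optimal adaptive posted price mechanism. Write $S_{ij}$ for the indicator that bidder $i$ ends up holding item $j$, and $p_{ij}$ for the (random) posted price on edge $(i,j)$, so that $R = \sum_{i,j} \E[p_{ij} S_{ij}]$. For each edge define $x^0_{ij}(r) = \Pr[S_{ij} = 1 \mid \V_{ij} = r]$ and $A^0_{ij} = \sum_r f_{ij}(r) \varphi_{ij}(r) x^0_{ij}(r)$, with $A^0_i = \sum_j A^0_{ij}$. I propose the {\sc (LP2)} solution $x_{ij}(r) = s_i \cdot x^0_{ij}(r)$ with scaling $s_i = \min\{1, B_i / A^0_i\}$. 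Feasibility is immediate: the supply and demand constraints reduce to $\sum_i \Pr[S_{ij} = 1] \le 1$ and $\sum_j \Pr[S_{ij} = 1] \le n_i$ (both automatic, combined with $s_i \le 1$); the budget constraint holds by construction since $s_i A^0_i \le B_i$; and the box constraint $x_{ij}(r) \in [0,1]$ holds since both factors are in $[0,1]$.

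The crux is the per-edge bound $\E[p_{ij} S_{ij}] \le 4 A^0_{ij}$. To establish it, condition on all randomness other than $v_{ij}$: fix such a conditioning and let $p$ be the posted price for item $j$ to bidder $i$ (take $p = \infty$ if $j$ is never offered). By the monotone optimizer hypothesis, the indicator that bidder $i$ takes $j$, as a function of $v_{ij}$, is a threshold $\mathbf{1}[v_{ij} \ge v^*]$ for some $v^* \ge p$ (binding knapsack constraints can only raise the effective threshold above the posted price). Conditioned thus, the actual payment contribution is $p \cdot \Pr[v_{ij} \ge v^*]$. Now apply Myerson's payment identity to the $\V$-bid with this monotone allocation: when $v^* \le B_i/4$, the induced allocation on $\V_{ij}$ is $\mathbf{1}[\V_{ij} \ge v^*]$, whose Myerson revenue is $v^* \Pr[\V_{ij} \ge v^*] = v^* \Pr[v_{ij} \ge v^*]$; when $v^* > B_i/4$, the induced allocation is $0$ for $r < B_i/4$ and equals $\Pr[v_{ij} \ge v^* \mid v_{ij} \ge B_i/4]$ at the atom $r = B_i/4$ (virtual value exactly $B_i/4$, since $\Pr[\V_{ij} > B_i/4] = 0$), yielding Myerson revenue $(B_i/4) \Pr[v_{ij} \ge v^*]$. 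In both cases the ratio of actual to Myerson payment is at most $4$, using $p \le v^*$ in the first case and $p \le B_i = 4 \cdot (B_i/4)$ in the second. Taking expectation over the conditioning recovers $\E[p_{ij} S_{ij}] \le 4 A^0_{ij}$.

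Summing this over $j$ gives $R_i := \sum_j \E[p_{ij} S_{ij}] \le 4 A^0_i$; together with the budget-based bound $R_i \le B_i$, this yields $\min(A^0_i, B_i) \ge R_i/4$ case-by-case (if $A^0_i \ge B_i$ then $\min = B_i \ge R_i$, and if $A^0_i < B_i$ then $\min = A^0_i \ge R_i/4$). Summing over $i$, the {\sc (LP2)} objective of the proposed solution $\sum_i s_i A^0_i = \sum_i \min(A^0_i, B_i)$ is at least $R/4$. The main obstacle is the $v^* > B_i/4$ branch of the per-edge analysis: the $\V$-bid framework collapses the whole tail $\{v_{ij} \ge B_i/4\}$ into the single atom at $r = B_i/4$, whose virtual value is capped at $B_i/4$, whereas the mechanism may extract up to $B_i$ on such an item. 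The factor $4$ in the lemma is precisely the slack $B_i/(B_i/4)$ that controls this loss, and is exactly why Definition~\ref{def:scale} truncates at $B_i/4$ rather than at $B_i$.
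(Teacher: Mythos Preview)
Your argument is correct. The per-edge inequality $\E[p_{ij}S_{ij}]\le 4A^0_{ij}$ via the case split on $v^*\lessgtr B_i/4$ is exactly the content of the paper's key step, and your handling of the atom at $r=B_i/4$ (virtual value $B_i/4$, mass $\Pr[v_{ij}\ge B_i/4]$) is the right way to phrase it. The one implicit assumption you use---that the effective threshold $v^*$ is at least the posted price $p$---is the standard per-item individual rationality of an additive knapsack solver and is also used (without comment) in the paper.

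Where you diverge from the paper is in the route to feasibility of {\sc (LP2)}. The paper introduces an intermediate program {\sc (LPSeq)} carrying explicit price variables $p_{ij}(r)$, shows any adaptive scheme is feasible for it after scaling prices by $4$, and then argues via a lexicographic-maximality reduction that the {\sc (LPSeq)} optimum coincides with the {\sc (LP2)} optimum (using Myerson's identity to replace $rx_{ij}(r)-\sum_{s<r}x_{ij}(s)$ by $\varphi_{ij}(r)x_{ij}(r)$). You bypass {\sc (LPSeq)} entirely: you plug the mechanism's conditional allocation $x^0_{ij}(r)$ directly into {\sc (LP2)}, repair the budget constraint by the per-bidder scaling $s_i=\min\{1,B_i/A^0_i\}$, and then recover the factor $4$ from $\min(A^0_i,B_i)\ge R_i/4$ using $R_i\le 4A^0_i$ and $R_i\le B_i$. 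Your route is shorter and avoids the lexicographic argument; the paper's route has the advantage of making explicit that the monotonicity constraints (5)--(6) of {\sc (LPSeq)} are redundant at optimum, which clarifies why {\sc (LP2)} has no such constraints. Both arguments rest on the same two pillars: Myerson's identity applied per edge, and the $B_i/(B_i/4)=4$ slack from Definition~\ref{def:scale}.
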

\begin{proof}
In the discussion below, recall the definition $\V_{ij} = \min(v_{ij}, B_i/4)$  from Definition~\ref{def:scale}, and observe that its density function $f_{ij}$ satisfies the regularity condition. For any adaptive mechanism, let $x_{ij}(r)$ denote the expected allocation of item $j$ to bidder $i$ conditioned on $\V_{ij} = r$, and let $p_{ij}(r)$ denote the corresponding expected price paid scaled down by a factor of $4$. We will show that the LP relaxation that is a $4$-approximation to maximizing expected revenue is the following:
\[ \mbox{Maximize } \ \ \ \sum_{i,j} \sum_r  f_{ij}(r) p_{ij}(r)  \qquad \mbox{({\sc LPSeq})} \]
\[ \begin{array}{rcllr}
\sum_{j} \sum_r f_{ij}(r) x_{ij}(r)  & \le & n_i & \forall i & \hfill (1)\\
\sum_{j} \sum_r f_{ij}(r)  p_{ij}(r)  & \le & B_i & \forall i & \hfill (2) \\
\sum_{i} \sum_r f_{ij}(r) x_{ij}(r)  & \le & 1 & \forall j  & \hfill (3)\\
r x_{ij}(r) - \sum_{s=1}^{r-1} x_{ij}(s) & \ge & p_{ij}(r)  & \forall i, j, r & \hfill (4) \\
x_{ij}(r) & \ge & x_{ij}(s)  & \forall i,j, r \ge s \ge 0 & \hfill (5)\\ 
p_{ij}(r) & \ge & p_{ij}(s)  & \forall i,j, r \ge s \ge 0 & \hfill (6)\\ 
x_{ij}(r) & \in & [0,1] & \forall i, j, r & \hfill (7)\\
\end{array}  \]
To see this, fix all valuations except $v_{ij}$. If the mechanism considers item $j$ for bidder $i$,  then it posts a unique price $p^* \le B_i$ for $(i,j)$. Let $X_{ij}(r) \in [0,1]$ denote the expected quantity taken when $\V_{ij} = r$, and let $P_{ij}(r) = p^* X_{ij}(r)$ denote the expected price paid. Note that as long as $r < B_i / 4$, $v_{ij} = r$ and $X_{ij}(r)$ is either $0$ or $1$. However, at $r = B_i / 4$, $v_{ij}$ is not unique and $X_{ij}(r)$ can take a fractional value. Since the bidder is a utility maximizer, $X_{ij}(r)$ (and hence $P_{ij}(r)$) are monotone functions of $r$. Furthermore, it is easy to check that $r X_{ij}(r) - \sum_{s=1}^{r-1} X_{ij}(s)  \ge  p^* X_{ij}(r)/4 = P_{ij}(r)/4$. Defining $x_{ij}(r) = \E[X_{ij}(r)]$ and $p_{ij}(r) = \E[P_{ij}(r)]/4$, where the expectation is over the remaining valuations, we see that all constraints in the above LP hold. 

 Consider the optimal solution to {\sc (LPSeq)}, and focus on edge $(i,j)$. Let $g_{ij}(r)  = r x_{ij}(r) - \sum_{s=1}^{r-1} x_{ij}(s) $. This is a monotone function of $r$, and so is $p_{ij}(r) \le g_{ij}(r)$. Consider the {\em lexicographically maximal} solution where it cannot be the case that  $p_{ij}(r)$ can be increased and some $p_{ij}(s)$ for $s > r$ can be decreased preserving all constraints and the objective. Let $r_0$ be the largest integer such that for all $r < r_0$, $g_{ij}(r) = p_{ij}(r)$. It is easy to see that for all $r \ge r_0$, we have $p_{ij}(r) = p_{ij}(r_0)$, else the lexicographic maximality property is violated. Now, reduce $x_{ij}(r_0)$ until $p_{ij}(r_0) =  g_{ij}(r_0)$. Observe that if $x_{ij}(r)$ is set to $x_{ij}(r_0)$ for all $r > r_0$, all constraints are preserved and the objective is unchanged.  This also ensures that $p_{ij}(r) =  g_{ij}(r)$ for all $r \ge 0$. This makes Constraints (4) and (6) irrelevant since the former is tight and implies the latter via Constraint (5). Thus, {\sc (LPSeq)} reduces to:
\[ \mbox{Maximize } \ \ \ \sum_{i,j} \sum_r  f_{ij}(r) \left(r x_{ij}(r) - \sum_{s=0}^{r-1} x_{ij}(s) \right)  \]
\[ \begin{array}{rcllr}
\sum_{j} \sum_r f_{ij}(r) x_{ij}(r)  & \le & n_i & \forall i\\
\sum_{j} \sum_r f_{ij}(r)   \left(r x_{ij}(r) - \sum_{s=1}^{r-1} x_{ij}(s) \right) & \le & B_i & \forall i \\
\sum_{i} \sum_r f_{ij}(r) x_{ij}(r)  & \le & 1 & \forall j\\
x_{ij}(r) & \in & [0,1] & \forall i, j, r\\
\end{array}  \]

Now, by Myerson's characterization~\cite{myerson}, we have:
$$   \sum_r \left(f_{ij}(r) \left(r x_{ij}(r) - \sum_{s=1}^{r-1} x_{ij}(s)  \right) \right)  = \sum_r f_{ij}(r) \varphi_{ij}(r) x_{ij}(r) $$
Plugging this formula into the objective and the second constraint completes the proof. 
\end{proof}
The final mechanism and analysis are the same as in Section~\ref{sec:mon}: Solve {\sc (LP2)}, decompose it into a convex combination of posted prices per edge, and sequentially post these prices for every bidder. To complete the analysis, note that Lemma~\ref{lem:round} only requires that the distribution be regular. This shows the following theorem:

\begin{theorem}
\label{thm:final}
There is a $O(1)$ approximation to the revenue of the optimal adaptive posted price scheme, when the types follow product distributions, and for each $(i,j)$, the distribution of valuation is regular.
\end{theorem}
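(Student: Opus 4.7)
The plan is to chain the preceding lemma with the rounding analysis of Theorem~\ref{thm:optitem}, observing that the monotone hazard rate condition was used there only to bridge {\sc (LPRev)} and {\sc (LP2)} via Lemma~\ref{lem:LP2}, and that this bridge is no longer needed once we restrict attention to adaptive posted price schemes. The preceding lemma already gives {\sc (LP2)} as a $4$-approximation upper bound on the expected revenue of the optimal adaptive posted price scheme for monotone-optimizer bidders, so it suffices to exhibit a (non-adaptive) posted price mechanism whose expected revenue is within a constant factor of the value of {\sc (LP2)}.

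To produce such a mechanism, I would solve {\sc (LP2)} and invoke Lemma~\ref{lem:round} on its optimal solution. Crucially, the proof of that lemma uses only monotonicity of the virtual valuation $\varphi_{ij}$, which holds under regularity of $v_{ij}$ (and hence of $\V_{ij}$, by the same truncation argument as Claim~\ref{claim1}), together with Myerson's identity, neither of which requires MHR. We therefore still obtain, for each edge $(i,j)$, a convex combination of two $0/1$ threshold allocations with both thresholds in $[0, B_i/4]$, and the identity~\eqref{eq1} relating the LP contribution on $(i,j)$ to the expected payment of a single-edge posted-price offer.

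The final step is to run the Posted Price Auction of Figure~\ref{fig:postprice} on this decomposition and replay the proof of Theorem~\ref{thm:optitem} essentially line-by-line. That proof has three ingredients, each distribution-free: the decomposition from Lemma~\ref{lem:round}; Markov's inequality applied to the three constraints of {\sc (LP2)} after the $1/4$ scaling introduced by the random offers in Step~5, which yields that with probability at least $1/6$ item $j$ is still available when bidder $i$ is considered, $i$ has demand remaining, and $i$'s spent budget is below $3B_i/4$; and the monotone-optimizer (or ``take everything when unconstrained'') assumption, which ensures the bidder purchases $j$ whenever $v_{ij}\ge \tilde r_{ij}$ in that favorable event, since $\tilde r_{ij} \le B_i/4$. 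Linearity of expectation over $(i,j)$ then converts the per-edge guarantee into an $\Omega(1)$ approximation to the value of {\sc (LP2)}, and hence to the optimal adaptive-posted-price revenue.

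The main obstacle I expect to navigate is the bookkeeping check that nothing in the proof of Theorem~\ref{thm:optitem} covertly used MHR beyond what was folded into Lemma~\ref{lem:LP2}. On inspection, MHR enters Section~\ref{sec:mon} only through Lemma~\ref{lem:hazard}, and that lemma is used solely inside Lemma~\ref{lem:LP2}; the rest --- the structural form of the LP optimum, the Markov arguments, and the randomized-offer procedure --- is insensitive to the specific distributional assumption and needs only that $\V_{ij}$ be regular. Once this is confirmed, the theorem follows by composing the factor-$4$ gap from the preceding lemma with the $O(1)$ factor from the rounding analysis.
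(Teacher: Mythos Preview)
Your proposal is correct and follows essentially the same approach as the paper: the paper's entire argument for this theorem is precisely that the preceding lemma bounds the optimal adaptive posted-price revenue by {\sc (LP2)}, and that the mechanism and analysis of Section~\ref{sec:mon} (Lemma~\ref{lem:round} and the proof of Theorem~\ref{thm:optitem}) go through unchanged because Lemma~\ref{lem:round} only needs regularity. Your careful audit that MHR enters only via Lemma~\ref{lem:hazard} inside Lemma~\ref{lem:LP2}, and that truncation preserves regularity, is exactly the bookkeeping the paper asserts without detail.
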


\noindent{\bf Acknowledgments:} We thank Shuchi Chawla, Vincent Conitzer, and Peng Shi for helpful discussions.

\end{document}